\begin{document}
\title{Throughput Optimal Policies for Energy Harvesting Wireless Transmitters with Non-Ideal Circuit Power}

\author{Jie Xu and Rui Zhang
\thanks{J. Xu is with the Department of
Electrical and Computer Engineering, National University of
Singapore (e-mail:elexjie@nus.edu.sg).}
\thanks{R. Zhang is with the Department of Electrical and Computer Engineering, National
University of Singapore (e-mail: elezhang@nus.edu.sg). He is also
with the Institute for Infocomm Research, A*STAR, Singapore.}}

\maketitle

\begin{abstract}\label{sec:abstract}
Characterizing the fundamental tradeoffs for maximizing \emph{energy
efficiency} (EE) versus \emph{spectrum efficiency} (SE) is a key
problem in wireless communication. In this paper, we address this
problem for a point-to-point additive white Gaussian noise (AWGN)
channel with the transmitter powered solely via \emph{energy
harvesting} from the environment. In addition, we assume a practical
on-off transmitter model with \emph{non-ideal} circuit power, i.e.,
when the transmitter is on, its consumed power is the sum of the
transmit power and a constant circuit power. Under this setup, we
study the optimal transmit power allocation to maximize the average
throughput over a finite horizon, subject to the time-varying energy
constraint and the non-ideal circuit power consumption. First, we
consider the \emph{off-line} optimization under the assumption that
the energy arrival time and amount are \emph{a priori} known at the
transmitter. Although this problem is non-convex due to the
non-ideal circuit power, we show an efficient optimal solution that
in general corresponds to a \emph{two-phase} transmission: the first
phase with an \emph{EE-maximizing} on-off power allocation, and the
second phase with a \emph{SE-maximizing} power allocation that is
non-decreasing over time, thus revealing an interesting result that
both the EE and SE optimizations are unified in an energy harvesting
communication system. We then extend the optimal off-line
algorithm to the case with multiple parallel AWGN channels, based on
the principle of \emph{nested optimization}. Finally, inspired
by the off-line optimal solution, we propose a new \emph{online}
algorithm under the practical setup with only the past and present
energy state information (ESI) known at the transmitter.
\end{abstract}

\begin{keywords}
Energy harvesting, power control, energy efficiency, spectrum
efficiency, circuit power.
\end{keywords}

\IEEEpeerreviewmaketitle
\setlength{\baselineskip}{1.0\baselineskip}
\newtheorem{definition}{\underline{Definition}}[section]
\newtheorem{fact}{Fact}
\newtheorem{assumption}{Assumption}
\newtheorem{theorem}{\underline{Theorem}}[section]
\newtheorem{lemma}{\underline{Lemma}}[section]
\newtheorem{corollary}{\underline{Corollary}}[section]
\newtheorem{proposition}{\underline{Proposition}}[section]
\newtheorem{example}{\underline{Example}}[section]
\newtheorem{remark}{\underline{Remark}}[section]
\newtheorem{algorithm}{\underline{Algorithm}}[section]
\newcommand{\mv}[1]{\mbox{\boldmath{$ #1 $}}}

\section{Introduction}\label{sec:introduction}

\PARstart{G}{reen} or energy efficient wireless communication has recently drawn
significant attention due to the growing concerns about the
operator's cost as well as the global environmental cost of using
fossil fuel based energy to power cellular infrastructures
\cite{Chen:NetworkES,Han:GreenRadio,Niu:CellZooming}. To achieve the
optimal energy usage efficiency for cellular networks, various
innovative ``green'' techniques across different layers of
communication protocol stacks have been proposed
\cite{Bhargava:GreenCellular,YChenComMag}. Among others, how to
maximize the bits-per-Joule {\it energy efficiency} (EE) for the
point-to-point wireless link has received a great deal of interest
recently
\cite{Energy_EfficientPacketTransmission,Miao,fractionalprogramming}.

\begin{figure}
\centering
 \epsfxsize=1\linewidth
    \includegraphics[width=7cm]{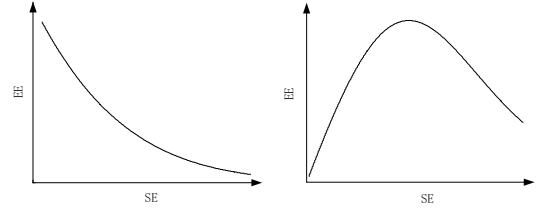}
\caption{Tradeoff between EE and SE for the ideal circuit power case
of $\alpha = 0$ (the left sub-figure) and the non-ideal circuit
power case of $\alpha > 0$ (the right sub-figure).} \label{fig:EESE}
\end{figure}

Besides maximizing EE, another key design objective in wireless
communication is to maximize the {\it spectrum efficiency} (SE) or
the number of transmitted bits-per-second-per-Hz (bps/Hz), due to
the explosive growth of wireless devices and applications that
require high data rates. In order to design wireless communication
systems both energy and spectrum efficiently, the fundamental EE-SE
relationship needs to be examined carefully. For the simple additive
white Gaussian noise (AWGN) channel with bandwidth $W$ and noise
power spectral density $N_0$, by applying  the Shannon's capacity
formula, the SE and EE are expressed as $\xi_{\rm SE} =
\log_2(1+\frac{P}{WN_0})$ and $\xi_{\rm EE} =
W\log_2(1+\frac{P}{WN_0})/P$, respectively, with $P$ denoting the
transmit power. It thus follows that the optimal EE-SE tradeoff is
characterized by $\xi_{\rm EE}=\frac{\xi_{\rm SE}}{(2^{\xi_{\rm
SE}}-1)N_0}$, where $\xi_{\rm EE}$ is a monotonically decreasing
function of $\xi_{\rm SE}$, as shown in the left sub-figure of Fig.
\ref{fig:EESE}. In this case, any SE increment will inevitably
result in a decrement in EE. However, in practical wireless
transmitters, besides the direct transmit power $P$, there
also exists {\it non-ideal} circuit power consumed when $P>0$, which
accounts for the power consumptions at e.g. the AC/DC converter and
the analog radio frequency (RF) amplifier, and amounts to a
significant part of the total consumed power at the transmitter.
Moreover, when there is no data transmission, i.e., $P=0$, the
transmitter can turn into a {\it micro-sleep} mode \cite{Blume:ES},
by switching off the power amplifier to reduce the circuit power
consumption. For the ease of description, in this paper the transmitter status
with $P>0$ and $P=0$ are referred to as the \emph{on} and \emph{off}
modes, respectively. Denote the non-ideal circuit power during an
``on'' mode as $\alpha\geq 0$ in Watt, the efficiency of the RF chain as
$0<\eta\leq 1$, and the power consumed during an ``off'' mode as
$\beta\geq 0$ in Watt. A practical power consumption model for the
wireless transmitter is given by \cite{Kim}
\begin{eqnarray} \label{eq1}
P_{\rm{total}}=\left\{\begin{array}{ll} \frac{P}{\eta} + \alpha, &
P>0 \\ \beta, & P=0, \end{array} \right.
\end{eqnarray}
where $P_{\rm{total}}$ denotes the total power consumed at the
transmitter. In practice, $\beta$ is generally much smaller as
compared to $\alpha$ and thus can be ignored for simplicity
\cite{YChenComMag,Miao,fractionalprogramming}. In this paper, we
assume $\beta = 0$. Therefore, in (\ref{eq1}) without
loss of generality we can further assume $\eta = 1$ since $\eta$ is
only a scaling constant.{\footnote{Note that the results
of this paper can be readily extended to the case with
$\eta < 1$ by appropriately scaling the obtained solutions.}}
With the above simplifications, the EE can be
re-expressed as $\xi_{\rm EE} =
W\log_2(1+\frac{P}{WN_0})/(P+\alpha)$ for $P>0$ and the resulting new EE-SE
tradeoff is shown in the right sub-figure of Fig. \ref{fig:EESE} for
a given $\alpha>0$, from which it is observed that the non-ideal circuit
power drastically  changes the behavior of the EE-SE tradeoff as
compared to the ideal case of $\alpha=0$.

Recently, a new design paradigm for achieving green wireless
communication has drawn a great deal of attention, in which
wireless terminals are powered primarily or even solely by
harvesting the energy from environmental sources such as solar and wind,
thereby reducing substantially the
energy cost in traditional wireless systems
\cite{Bhargava:GreenCellular,Kansal}. With the embedded energy
harvesting device and rechargeable battery, wireless transmitters
can replenish energy from the environment without the need of
replacing battery or drawing power from the main grid. Thus,
communication utilizing energy harvesting nodes can promisingly
achieve a jointly spectrum and energy efficiency maximization goal.
However, there are new challenges in designing energy harvesting
powered wireless communication, which are not present in traditional
systems. For example, the intermittent nature of most practical energy
harvesting sources causes random power availability at the
transmitter, due to which a new type of transmitter-side power
constraint, namely {\it energy harvesting constraint}, is
imposed, i.e., the energy accumulatively consumed up to any time
cannot exceed that accumulatively harvested. As a result, existing
EE-SE tradeoffs (cf. Fig. \ref{fig:EESE}) revealed for conventional wireless systems assuming
a given constant power supply  are not directly
applicable to an energy harvesting system, with or without the
non-ideal circuit power. It is worth noting that some prior work in the
literature has investigated the throughput-optimal power control
policies for the energy harvesting wireless transmitter assuming an
ideal circuit-power model (i.e., $\alpha=0, \eta=1$, and $\beta=0$ in (\ref{eq1})), in
which useful structural properties of the optimal solution were
obtained (see e.g. \cite{Yang,Zhang,Yener} and
references therein). However, there is very limited work on studying
the effects of the non-ideal circuit power with $\alpha>0$ on the
throughput-optimal power allocation for energy harvesting communication
systems. To our best knowledge, only \cite{Bai2011} has proposed a
calculus-based approach to address this
problem; however, it does not reveal the structure of the optimal
solution. Motivated by the known result
that the non-ideal circuit power modifies the EE-SE tradeoff
considerably in the conventional case with constant power supply as
shown in  Fig. \ref{fig:EESE}, we expect that it should also play an
important role in the EE-SE tradeoff characterization under the new
setup with random power supply due to energy harvesting, which motivates our work.

In this paper, we study the throughput maximization problem for a
point-to-point AWGN channel with an energy harvesting powered
transmitter over a finite horizon. For the purpose of exposition, we
assume that the receiver has a constant power supply (e.g. battery).
We also assume that at the transmitter, the renewable energy arrives
at a discrete set of time instants with variable energy amount. Under this setup, we
investigate the effects of the
non-ideal circuit power with $\alpha>0$ on the throughput-optimal power allocation
as well as the resulting new EE-SE tradeoff. The main contributions of
this paper are summarized as follows.
\begin{itemize}
\item First, we consider the \emph{off-line} optimization under the assumption that the energy
arrival time and amount for harvesting are \emph{a priori} known at
the transmitter. We show that the optimal power allocation to
maximize the average throughput under this setup is a {\it
non-convex} optimization problem, due to the non-ideal circuit
power. Nevertheless, we derive an efficient optimal
solution for this problem, which is shown to correspond to a novel
\emph{two-phase} transmission structure: the first phase with an
\emph{EE-maximizing} on-off power allocation, and the second phase
with a \emph{SE-maximizing} power allocation that is non-decreasing
over time. Thus, we reveal an interesting result that both the EE
and SE optimizations are unified in an energy harvesting powered
wireless system.

\item We then extend the optimal off-line policy for the single-channel case to
the general case with multiple parallel AWGN channels, subject to a
total energy harvesting power constraint. Using tools from {\it nested optimization}, we transform
this problem with multi-dimensional (vector) power optimization to
an equivalent one with only one-dimensional (scalar) power
optimization, which can then be efficiently solved by the algorithm
derived for the single-channel case.

\item Furthermore, inspired by the off-line optimal solution, we propose a heuristic
\emph{online} algorithm under the practical setup where only the
causal (past and present) energy state information (ESI) for
harvesting is assumed to be known at the transmitter. It is shown by
simulations that the proposed online algorithm achieves a small
performance gap from the throughput upper bound by the optimal off-line
solution, and also outperforms other heuristically designed online
algorithms.
\end{itemize}

The rest of this paper is organized as follows. Section
\ref{sec:system model} introduces the system model and presents the
problem formulation. Section \ref{sec:offline} derives the optimal
off-line power allocation policy for the single-channel case.
Section \ref{sec:multi-channel} extends the result to the
multi-channel case based on the nested optimization. Section
\ref{sec:online} presents the proposed online algorithm and Section
\ref{sec:simulation} evaluates its throughput performance by simulations.
Finally, Section \ref{sec:conclusion} concludes the paper.

\section{System Model and Problem Formulation}\label{sec:system model}

In this paper, we consider the point-to-point transmission over an
AWGN channel with constant channel and coherent detection at the
receiver. The transmitter is assumed to replenish energy from an
energy harvesting device that collects energy over time from a
renewable source (e.g. solar or wind). We consider the block-based
energy scheduling with each block spanning over $T$ seconds (secs). We
assume that the renewable energy arrives during each block at $N-1$
time instants given by $0<t_{1}<\cdots< t_{N-1}<T$, and the energy
values collected at these time instants are denoted by
$E_1,\ldots,E_{N-1}$, respectively. In general, $N\geq 1$, $t_i$,
and $E_i>0$, $i=1,\ldots,N-1$, are modeled by an appropriate random
process for the given energy source. For convenience, we assume $t_0
= 0$ and denote $E_0$ as the initial energy stored in the energy storage
device at the beginning time of each block. For the purpose of
exposition, we assume that the energy storage device has an infinite
capacity in this paper. Moreover, we refer to the time interval
between two consecutive energy arrivals as an {\it epoch}, and
denote the length of the $i$th epoch as $L_i = t_{i} - t_{i-1},
i=1,\ldots,N$; for convenience, we denote $t_N = T$.

Suppose that the transmit power over time in each block is denoted
by $P(t)\geq 0, t\in (0,T]$. Assume that the maximum transmission
rate that can be reliably decoded at the receiver at any time $t$ is
a function of $P(t)$, given by $C(t) = R(P(t))$, which satisfies the
following properties:
\begin{enumerate}
\item $R(P(t)) \ge 0$, $\forall P(t)\geq 0$, and $R(0)=0$;
\item $R(P(t))$ is a strictly concave function over $P(t)\geq 0$;
\item $R(P(t))$ is a monotonically increasing function over $P(t)\geq 0$.
\end{enumerate}
For example, if adaptive modulation and coding (AMC) is applied at the transmitter, then the
achievable rate $C(t)$ is denoted by \cite{GoldsmithBook}
\begin{equation} \label{sys1}
\begin{array}{l}
\displaystyle R(P(t)) = W\log_2\left(1+\frac{hP(t)}{\Gamma WN_0}\right)
\end{array}
\end{equation}
in bits-per-sec (bps), where $\Gamma$ accounts for the gap from the channel
capacity due to a practical coding and modulation scheme used; $h>0$ denotes
the constant channel power gain.

As discussed in Section \ref{sec:introduction}, we assume an on-off
transmitter power model given in (\ref{eq1}) with $\beta = 0$ and
$\eta =1$; thus, we rewrite (\ref{eq1}) as
\begin{equation} \label{eq102}
\begin{array}{l}{P_{{\rm{total}}}}(t) = \left\{ {\begin{array}{*{20}{c}}
\displaystyle {{P(t)} + {\alpha},}\\
{{0},}
\end{array}} \right.\begin{array}{*{20}{l}}
{P(t) > 0}\\
{P(t) = 0}.
\end{array}\end{array}
\end{equation}

Since the accumulatively consumed energy up to any time at the
transmitter cannot exceed the energy accumulatively harvested, the
energy harvesting constraint on the total consumed power is given by
\begin{equation} \label{eq103}\begin{array}{l}\displaystyle \int_0^{t_i} {{P_{{\rm{total}}}}\left( t \right){\rm{d}}t}  \le \sum\limits_{j = 0}^{i - 1} {{E_j}}, ~ i = 1,\ldots,N. \end{array}
\end{equation}
Thus, the throughput maximization problem over a finite horizon $T$
can be formulated as follows.
\begin{align} \label{eq105}
\mathop{\max }\limits _{P(t)\geq 0} ~ &
\int_0^T {R\left( P(t) \right)} {\rm{d}}t \nonumber \\
{\rm s.t.} ~ & \int_0^{t_i} {{P_{{\rm{total}}}}\left( t \right)}
{\rm{d}}t \le \sum\limits_{j = 0}^{i - 1} {{E_j}}, ~i = 1,\ldots,N.
\end{align}

The optimal online solution for the above problem with the causal
ESI, i.e., for any given $t$, only $E_i$'s with $t_i\leq t$ are known
at the transmitter, can be numerically solved by the technique of
dynamic programming similar to \cite{Zhang}. However, such a
solution is of high computational complexity due to ``the curse of
dimensionality'' for dynamic programming. In addition, the resulting
solution will not provide any insight to the structure of the
optimal power allocation for an energy harvesting transmitter.
Therefore, in this paper, we take an alternative approach by first
solving the off-line optimization for (\ref{eq105}),
assuming that all the energy arrival time $t_i$'s and amount $E_i$'s
are {\it a priori} known at the transmitter in each block
transmission, and then based on the structure of the off-line
optimal solution, devising online algorithms for the practical setup
with only causal ESI known at the transmitter.

For the off-line optimization of (\ref{eq105}), it is easy
to see that the objective function is concave; however, the
constraint is non-convex in general since $P_{\rm{total}}(t)$ in
(\ref{eq102}) is a concave function of $P(t)$ if $\alpha>0$. As a
result, the problem is in general non-convex and thus cannot be
solved by standard convex optimization techniques. In the next
section, we will propose an efficient solution for this problem by
exploiting its special structure.

\begin{remark}
It is worth noting that for the off-line optimization,
(\ref{eq105}) can be shown to be convex if $\alpha=0$. In this case,
similar problems to (\ref{eq105}) have been studied in the
literature \cite{Yang,Zhang}, in which the throughput-optimal power
allocation $P(t)$ was shown to follow a non-decreasing
piecewise-constant (staircase) function over $t$. This power
allocation can be interpreted as maximizing the SE of the
point-to-point AWGN channel subject to the new energy harvesting
power constraint. As will be shown later in this paper, the
non-ideal circuit power with $\alpha>0$  will change the optimal
power allocation for this problem considerably.
\end{remark}

\section{Off-Line Optimization}\label{sec:offline}

In this section, we solve the off-line optimization problem in
(\ref{eq105}) with the non-ideal circuit power, i.e., $\alpha>0$.

\subsection{Reformulated Problem}

First, we give the following lemma.{\footnote{We
thank the anonymous reviewer who brought our attention to
\cite{Bai2011}, in which an alterative proof for Lemma
\ref{Lemma:1} is given based on a calculus approach.}}

\begin{lemma}\label{Lemma:1}
During any $i$th epoch $(t_{i-1}, t_{i}]$, $i=1,\ldots,N$, the
optimal solution for (\ref{eq105}) is given by $P(t)=P_i>0$ for the
portion of time ${\mathcal T}_i^{\rm{on}} \subseteq (t_{i-1},
t_{i}]$, and $P(t)=0$ for the remaining time ${\mathcal
T}_i^{\rm{off}} \subseteq (t_{i-1}, t_{i}]$, where ${\mathcal
T}_i^{\rm{on}} \cap {\mathcal T}_i^{\rm{off}} = \phi$ and ${\mathcal
T}_i^{\rm{on}} \cup {\mathcal T}_i^{\rm{off}} = (t_{i-1}, t_{i}]$.
\end{lemma}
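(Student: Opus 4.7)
The plan is to argue by an averaging-and-rearrangement argument within each epoch, combining the fact that the energy-harvesting constraints (\ref{eq103}) are cumulative up to the epoch boundaries $t_i$ with strict concavity of $R(\cdot)$. The key insight is that reshaping the power profile inside a single epoch leaves the cumulative energy at every future deadline unchanged, so feasibility is automatically preserved.

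Concretely, let $P^*(t)$ be any optimal solution of (\ref{eq105}). For each $i$, define the on-set $\mathcal{T}_i^{\rm on} = \{t \in (t_{i-1}, t_i] : P^*(t) > 0\}$, its measure $\tau_i^* = |\mathcal{T}_i^{\rm on}|$, and the transmit energy $e_i^* = \int_{\mathcal{T}_i^{\rm on}} P^*(t)\,\mathrm{d}t$. If $\tau_i^* = 0$ the epoch is entirely off and there is nothing to show; otherwise set $\bar P_i = e_i^*/\tau_i^* > 0$. I would then construct a modified policy $\tilde P(t)$ that takes the constant value $\bar P_i$ on $\mathcal{T}_i^{\rm on}$ and $0$ elsewhere in $(t_{i-1}, t_i]$, for every epoch.

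The next step is to verify that $\tilde P$ is feasible and has throughput no smaller than $P^*$. From (\ref{eq102}), the total energy consumed by $\tilde P$ inside epoch $i$ equals $\bar P_i \tau_i^* + \alpha\tau_i^* = e_i^* + \alpha\tau_i^*$, which is exactly the energy consumed by $P^*$ in that epoch. Hence the cumulative consumption $\int_0^{t_i} \tilde P_{\rm total}(t)\,\mathrm{d}t$ coincides with that of $P^*$ for every $i$, so all constraints in (\ref{eq103}) continue to hold. By Jensen's inequality applied to the strictly concave $R(\cdot)$ on $\mathcal{T}_i^{\rm on}$,
\begin{equation*}
\int_{\mathcal{T}_i^{\rm on}} R(P^*(t))\,\mathrm{d}t \;\le\; \tau_i^* R(\bar P_i) \;=\; \int_{\mathcal{T}_i^{\rm on}} R(\tilde P(t))\,\mathrm{d}t,
\end{equation*}
with equality iff $P^*(t) = \bar P_i$ for almost every $t \in \mathcal{T}_i^{\rm on}$. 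Summing over $i$ and invoking optimality of $P^*$ forces equality in each epoch, so $P^*(t)$ must take the constant positive value $P_i = \bar P_i$ on $\mathcal{T}_i^{\rm on}$ (a.e.) and vanish on $\mathcal{T}_i^{\rm off} = (t_{i-1}, t_i] \setminus \mathcal{T}_i^{\rm on}$, as claimed.

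The argument is largely conceptual; there are no hard estimates to grind through. The only point requiring care is the feasibility check: because the reshape is carried out within a single epoch and preserves the epoch's total consumed energy, no cumulative constraint at any deadline $t_j$ with $j \ge i$ is violated. Note that the argument uses \emph{strict} concavity only to pin down the on-power to a single constant level; without it one would still recover the weaker conclusion that an on-off policy with constant on-power is optimal.
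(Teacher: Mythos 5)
Your proof is correct and follows essentially the same route as the paper's: replace the power values on the on-set of each epoch by their average, observe that both the transmit energy and the circuit energy (via the unchanged on-duration) are preserved so the cumulative constraints at the epoch boundaries still hold, and invoke Jensen's inequality with strict concavity to force the on-power to be constant. The only cosmetic difference is that the paper first rearranges the on-period to the front of the epoch before averaging, whereas you work directly with an arbitrary measurable on-set, which the lemma's statement permits.
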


\begin{proof}
See Appendix \ref{appendix:proof Lemma 1}.
\end{proof}

According to Lemma \ref{Lemma:1} and by denoting the constant
transmit power $P_i>0$ for the ``on'' period with length $0 \le
l_i^{{\rm{on}}} \le L_i$ in the $i$th epoch,  (\ref{eq105})
can be reformulated as
\begin{align} \label{eq4}
\mathop {\max }\limits_{\{P_i\},~\{l_i^{{\rm{on}}}\}} ~& \sum \limits _{i=1}^{N}l_i^{{\rm{on}}}R(P_i) \nonumber \\
{\rm{s.t.}} ~ & P_i>0, ~ i=1,\ldots, N \nonumber \\
& 0\le l_i^{{\rm{on}}} \le L_i, ~ i=1,\ldots, N \nonumber \\
& \sum \limits_{j=1}^{i}{(P_j + \alpha)l_j^{{\rm{on}}}} \le \sum
\limits _{j=0}^{i-1} E_j, ~ i=1,\ldots, N.
\end{align}
However, the above problem is still non-convex due to the coupling
between $P_i$'s and $l_i^{\rm{on}}$'s. In the following, we first
solve this problem for the special case of $N=1$ and then generalize
the solution to the case with $N\geq1$.

\subsection{Single-Epoch Case with $N=1$}

In the single-epoch case with $N=1$, the problem in (\ref{eq4}) is
reduced to
\begin{align} \label{eq105oneepoch}
\mathop{\max }\limits_{l_1^{\rm{on}},P_1} ~&
{l_1^{\rm{on}}}R(P_1) \nonumber \\
{\rm s.t.} ~& P_1>0 \nonumber \\
& 0 \le {l_1^{\rm{on}}} \le T \nonumber \\
& {l_1^{\rm{on}}}(P_1+\alpha) \le E_0.
\end{align}
The solution of the above problem is given in the following
proposition.

\begin{proposition}\label{Proposition:1}
The optimal solution $P_1^*$ and $l_1^{\rm{on}*}$ for
(\ref{eq105oneepoch}) is expressed as
\begin{equation} \label{SinEpoch:2}
\displaystyle P_1^* = \max\left(P_{ee},\frac{E_0}{T}-\alpha\right)
\end{equation}
\begin{equation} \label{SinEpoch:3}
\displaystyle l_1^{\rm{on}*} = \frac{E_0}{P_1^* + \alpha}
\end{equation}
where $P_{ee}$ is given by
\begin{equation} \label{optimalEE}
\begin{array}{l}\displaystyle
P_{ee}=\mathop{\arg\max }\limits_{P_1> 0} \frac{R(P_1)}{P_1+\alpha}.
\end{array}
\end{equation}
\end{proposition}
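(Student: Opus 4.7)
The plan is to reduce (\ref{eq105oneepoch}) to a one-dimensional optimization by eliminating $l_1^{\rm{on}}$ via the energy constraint, and then exploit the unimodality of the energy-efficiency function $g(P):=R(P)/(P+\alpha)$.

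First, I would show that at any optimum the energy constraint is tight, i.e., $l_1^{\rm{on}*}(P_1^*+\alpha)=E_0$. Suppose instead that $l_1^{\rm{on}*}(P_1^*+\alpha)<E_0$. If $l_1^{\rm{on}*}<T$, then slightly increasing $l_1^{\rm{on}*}$ strictly increases the objective $l_1^{\rm{on}}R(P_1)$ (since $R(P_1^*)>0$) while keeping all constraints satisfied; if instead $l_1^{\rm{on}*}=T$, one may slightly increase $P_1^*$ and strictly improve the objective, since $R$ is monotonically increasing. Hence the energy constraint binds, and we may substitute $l_1^{\rm{on}}=E_0/(P_1+\alpha)$, under which the constraint $l_1^{\rm{on}}\le T$ becomes $P_1\ge E_0/T-\alpha$.

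The problem thus reduces to the scalar optimization
\begin{equation*}
\max_{P_1>0,~P_1\ge E_0/T-\alpha}~E_0\,\frac{R(P_1)}{P_1+\alpha}=E_0\,g(P_1).
\end{equation*}
The central step is to establish that $g$ is strictly quasi-concave with unique maximizer $P_{ee}$ defined in (\ref{optimalEE}). I would examine $h(P):=R'(P)(P+\alpha)-R(P)$, whose sign coincides with that of $g'(P)$. Strict concavity of $R$ gives $h'(P)=R''(P)(P+\alpha)<0$, so $h$ is strictly decreasing; combined with the fact that $h(P_{ee})=0$ by first-order optimality, this yields that $g$ is strictly increasing on $(0,P_{ee})$ and strictly decreasing on $(P_{ee},\infty)$.

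Combining these two ingredients gives the proposition. If $E_0/T-\alpha\le P_{ee}$, the unconstrained maximizer $P_{ee}$ is feasible and hence optimal, with corresponding $l_1^{\rm{on}*}=E_0/(P_{ee}+\alpha)\le T$. Otherwise the feasible interval is $[E_0/T-\alpha,\infty)\subseteq(P_{ee},\infty)$, on which $g$ is strictly decreasing, so the optimum is at the boundary $P_1^*=E_0/T-\alpha$, giving $l_1^{\rm{on}*}=T$. Both cases collapse to $P_1^*=\max(P_{ee},\,E_0/T-\alpha)$ and $l_1^{\rm{on}*}=E_0/(P_1^*+\alpha)$, matching (\ref{SinEpoch:2})--(\ref{SinEpoch:3}). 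I expect the main obstacle to be the rigorous proof of unimodality of $g$ on $(0,\infty)$ and the verification that $P_{ee}$ is attained (which implicitly requires $g(P)\to 0$ as $P\to\infty$, a consequence of $R(0)=0$ and strict concavity of $R$); the remaining reduction is essentially a Lagrangian elimination argument.
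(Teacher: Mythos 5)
Your proposal is correct and follows essentially the same route as the paper's proof in Appendix~B: argue that the energy constraint must bind, substitute $l_1^{\rm on}=E_0/(P_1+\alpha)$ to reduce to maximizing $R(P_1)/(P_1+\alpha)$ over $P_1\ge E_0/T-\alpha$, and invoke the unimodality of this ratio about $P_{ee}$. The only difference is cosmetic: the paper cites \cite{Miao} for the monotone-increasing/decreasing behavior around $P_{ee}$, whereas you derive it directly from the sign of $h(P)=R'(P)(P+\alpha)-R(P)$, which is a self-contained (and slightly more careful) version of the same step.
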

\begin{proof}
See Appendix \ref{appendix:proof Proposition 1}.
\end{proof}

It is worth noting that $P_{ee}$ given in (\ref{optimalEE}) is the
optimal power allocation that maximizes the EE of the AWGN channel
under the non-ideal circuit power model as shown in \cite{Miao}.
From Proposition \ref{Proposition:1}, it follows that if $P_{ee}
> \frac{E_0}{T}-\alpha$, we have $P_1^* = P_{ee}$,
$l_1^{\rm{on}*} < T$  and $l_1^{\rm{off}*} = T- l_1^{\rm{on}*}
> 0$, which corresponds to an on-off transmission. However, if $P_{ee}
\leq \frac{E_0}{T}-\alpha$, we have $P_1^* =\frac{E_0}{T}-\alpha$,
$l_1^{\rm{on}*} =T$ and $l_1^{\rm{off}*} = 0$, which corresponds to
a continuous transmission. We will see in the next subsection that
the EE-maximizing power allocation $P_{ee}$ plays an important role
in the general case with $N\geq 1$. Also note that the right-hand
side (RHS) of (\ref{optimalEE}) is a quasi-concave function of $P_1$
since it is concave-over-linear \cite{Boydbook}; thus, $P_{ee}$ can
be efficiently obtained by a simple bisection search
\cite{Boydbook}.

\subsection{Multi-Epoch Case with $N\geq 1$}

Inspired by the solution for the single-epoch case, we derive the
optimal solution for (\ref{eq4}) in the general case with
$N\geq 1$, as given by the following theorem.

\begin{theorem}\label{theorem:1}
The optimal solution of (\ref{eq4}), denoted by $[P_1^*,
\ldots, P_N^*]$ and $[l_1^{{\rm{on}}*},\ldots,l_N^{{\rm{on}}*}]$, is
obtained as follows. Denoting
\begin{align}
i_{ee,0} &= 0,\nonumber\\
i_{ee,j} &= \min \bigg\{i \bigg| \frac{\sum \nolimits
_{k=i_{ee,j-1}}^{i-1} E_k}{\sum \nolimits _{k=i_{ee,j-1}+1}^{i}L_k} - \alpha \le P_{ee},\bigg.\nonumber\\
& ~~~~~~~~~~~~~~\bigg. i=i_{ee,j-1}+1,\ldots,N \bigg \}, j\ge 1,\nonumber\\
J &= \arg \max\limits_{j\ge 0}i_{ee,j},\nonumber\\
i_{ee} &= i_{ee,J}, \label{SinEpoch:12}
\end{align}
the optimal transmit power for epochs $1,\ldots,i_{ee}$ is given by
\begin{equation} \label{optimal01}
\begin{array}{l}
\displaystyle P_i^* = P_{ee}, ~ i = 1,\ldots, i_{ee}
\end{array}
\end{equation}
and the optimal on-period ${l_i^{{\rm{on}}*}}, i = 1,\ldots,i_{ee}$,
is any set of non-negative values satisfying
\begin{equation} \label{optimal02}
\begin{array}{l}
\displaystyle \left( {{P_{ee}} + \alpha } \right)\sum\nolimits_{i =
1}^{{i_{ee}}} {l_i^{{\rm{on}}*}}  = \sum\nolimits_{i = 0}^{{i_{ee}}
- 1} {{E_i}}
\end{array}
\end{equation}
\begin{equation} \label{optimal03}
\begin{array}{l}
\displaystyle \left( {{P_{ee}} + \alpha } \right)\sum\nolimits_{i =
1}^{j} {l_i^{{\rm{on}}*}}  \leq \sum\nolimits_{i = 0}^{j - 1}
{{E_i}}, ~ j=1,\ldots,i_{ee}.
\end{array}
\end{equation}
Moreover, for epochs $i_{ee} + 1,\ldots,N$, the optimal solution is
given by
\begin{equation} \label{optimal104}
\begin{array}{l}
{l_i^{{\rm{on}}*}} = L_i, ~ i = i_{ee}+1,\ldots, N
\end{array}
\end{equation}
\begin{equation} \label{optimal05}
\begin{array}{l}
\displaystyle P_i^* = \frac{\sum \nolimits _{k={n_{i-1}}}^{n_i-1}
E_k}{\sum \nolimits _{k={n_{i-1}}+1}^{n_i}L_k} - \alpha, ~ i =
i_{ee}+1,\ldots, N
\end{array}
\end{equation}
where
\begin{align} \label{optimal06}
 n_{i_{ee}} & = i_{ee},\nonumber\\
 n_i & = \arg \mathop {\min }\limits_{j: ~{n_{i-1}+1}
\leq j \leq N} \left\{ {\frac{{\sum\nolimits_{k = {n_{i-1}} }^{j -
1} {{E_k}} }}{{\sum\nolimits_{k = {n_{i-1}+1}}^j {{L_k}} }} - \alpha
} \right\}, \nonumber\\&~~~~~~~~~~~~~~~~~~~~~~~~~~~i = i_{ee}+1,\ldots, N.
\end{align}
\end{theorem}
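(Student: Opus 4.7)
The plan is to extract structural necessary conditions from the KKT system of (\ref{eq4}) --- which, although non-convex, is smooth --- and then verify that the stated two-phase allocation is the unique KKT point, with a separate global-optimality certificate. First, form the Lagrangian with multipliers $\mu_i \ge 0$ for the cumulative energy constraints, $\lambda_i \ge 0$ for $l_i^{\rm on} \le L_i$, and $\nu_i \ge 0$ for $l_i^{\rm on} \ge 0$. The stationarity condition $\partial\mathcal{L}/\partial P_i = 0$ (valid because $l_i^{\rm on} > 0$ in any epoch where $P_i > 0$, by Lemma~\ref{Lemma:1}) yields $R'(P_i) = \sum_{j\ge i}\mu_j$, while $\partial\mathcal{L}/\partial l_i^{\rm on} = 0$ yields $R(P_i) - (P_i+\alpha)R'(P_i) = \lambda_i - \nu_i$. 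The function $f(P) := R(P) - (P+\alpha)R'(P)$ has $f'(P) = -(P+\alpha)R''(P) > 0$ by strict concavity of $R$, and the first-order condition defining (\ref{optimalEE}) gives $f(P_{ee}) = 0$; hence $f$ is strictly increasing and vanishes exactly at $P_{ee}$. Complementary slackness then forces the dichotomy: either $0 < l_i^{\rm on} < L_i$ with $P_i = P_{ee}$, or $l_i^{\rm on} = L_i$ with $P_i \ge P_{ee}$. Moreover, $\sum_{j\ge i}\mu_j$ is non-increasing in $i$, so $R'(P_i)$ is non-increasing and $\{P_i\}$ is non-decreasing; consequently every Phase~1 epoch (at $P_{ee}$) precedes every Phase~2 epoch (at $P_i \ge P_{ee}$ with full on-time), which is the two-phase structure of the theorem.

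Once the split point $i_{ee}$ is provisionally fixed, Phase~2 is the convex subproblem obtained by substituting $l_i^{\rm on} = L_i$ for $i > i_{ee}$. Its KKT conditions $R'(P_i) = \sum_{j\ge i}\mu_j$ together with energy complementary slackness reproduce the classical ``minimum-average-slope'' water-filling of \cite{Yang,Zhang}, shifted by the circuit power $\alpha$: starting from $n_{i_{ee}} = i_{ee}$, one greedily picks $n_i$ to minimise the grouped slope $\frac{\sum_{k=n_{i-1}}^{j-1} E_k}{\sum_{k=n_{i-1}+1}^{j} L_k} - \alpha$, which is precisely (\ref{optimal06}), and assigns the common power (\ref{optimal05}) on each group $[n_{i-1}+1, n_i]$; a standard interchange argument shows the resulting sequence is non-decreasing and dominates $P_{ee}$. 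Phase~1 then reduces to pure budget allocation: distribute the total energy $\sum_{j=0}^{i_{ee}-1} E_j$ over epochs $1,\ldots,i_{ee}$ at fixed power $P_{ee}$, producing the equality (\ref{optimal02}) with the causality inequalities (\ref{optimal03}) as the only binding restriction on how on-times are interleaved.

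The split point $i_{ee}$ is determined by the interface condition that every Phase~2 group have average slope strictly exceeding $P_{ee}$ while no Phase~1 energy is stranded. Working forward from $i_{ee,0} = 0$, the smallest index $i_{ee,1}$ with $\frac{\sum E_k}{\sum L_k} - \alpha \le P_{ee}$ marks the farthest point to which the EE-maximizing regime can extend locally; iterating yields the chain in (\ref{SinEpoch:12}), and $i_{ee} = i_{ee,J}$ is the largest such extension, ensuring simultaneously that Phase~1 energy exactly fills the on-periods at $P_{ee}$ and that the residual epochs satisfy the Phase~2 minimum-slope inequalities. The main obstacle, and where non-convexity bites, is certifying \emph{global} rather than merely stationary optimality; I would close this gap with two canonical perturbations: (i) in any epoch with $P_i \ne P_{ee}$ and $l_i^{\rm on} < L_i$, replacing $(P_i, l_i^{\rm on})$ by $(P_{ee}, E/(P_{ee}+\alpha))$ at the same energy cost strictly increases $l_i^{\rm on} R(P_i)$ by the quasi-concavity used to define $P_{ee}$; and (ii) within any Phase~2 block violating (\ref{optimal05})--(\ref{optimal06}), shifting energy between groups strictly increases throughput by strict concavity of $R$ (the familiar exchange argument from \cite{Yang,Zhang}). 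These monotone improvements drive any feasible allocation toward the stated solution, establishing its global optimality.
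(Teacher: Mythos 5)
Your KKT analysis is a genuinely different route to the \emph{qualitative} two-phase structure, and that part is sound: the identity $f(P_i)=R(P_i)-(P_i+\alpha)R'(P_i)=\lambda_i-\nu_i$ with $f$ strictly increasing and vanishing at $P_{ee}$, combined with the monotonicity of $\sum_{j\ge i}\mu_j$, cleanly shows that any stationary point (under a constraint qualification) consists of partial-on epochs at $P_{ee}$ followed by full-on epochs at non-decreasing powers at least $P_{ee}$. The paper argues quite differently: it splits the horizon at $i_{ee}$ into subproblems ${\mathbb{P}}_1$ and ${\mathbb{P}}_2$, bounds ${\mathbb{P}}_1$ from above by a relaxation that pools all of $E_0,\ldots,E_{i_{ee}-1}$ into a single energy constraint (reducing to the single-epoch Proposition \ref{Proposition:1}), exhibits a feasible point of ${\mathbb{P}}_1$ attaining that bound, and handles ${\mathbb{P}}_2$ by an explicit exchange argument followed by the known staircase result of \cite{Yang,Zhang}.

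Two genuine gaps remain, however. First, global optimality is not actually established. Since (\ref{eq4}) is non-convex, a KKT point need not be a global maximizer, and your closing claim that perturbations (i) and (ii) ``drive any feasible allocation toward the stated solution'' is an assertion rather than a proof: you do not show that the improvement sequence terminates at the claimed point rather than at some other stationary configuration; perturbation (i) is infeasible as stated when $P_i>P_{ee}$, since the equal-energy on-time $E/(P_{ee}+\alpha)$ then exceeds $l_i^{\rm on}$ and possibly $L_i$; and neither perturbation addresses how energy is partitioned \emph{between} the two phases subject to causality, which is precisely where the split point is decided. Second, the identification of the split with the specific iterate (\ref{SinEpoch:12}) is asserted, not derived. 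The theorem's content is not merely ``EE phase then SE phase'' but that the boundary is exactly $i_{ee}=i_{ee,J}$, that the first-phase energy $\sum_{j=0}^{i_{ee}-1}E_j$ is exhausted exactly at $t_{i_{ee}}$ (so the constraints decouple into the two subproblems), that a schedule of on-times satisfying (\ref{optimal02})--(\ref{optimal03}) is actually feasible, and that every residual average slope after $i_{ee}$ strictly exceeds $P_{ee}$ so that full-on transmission with powers (\ref{optimal05})--(\ref{optimal06}) is consistent. The paper proves each of these (maximality of $i_{ee}$ yields the strict inequality (\ref{Proof101}), which is what forces $l_i^{\rm on*}=L_i$ in the second phase, and the explicit construction in (\ref{equ:revision:1}) certifies feasibility and tightness for the first phase); your write-up restates the definition of $i_{ee,j}$ and claims it ``ensures'' these properties. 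Until those verifications are supplied, the proposal establishes the shape of the optimal solution but not the theorem.
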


\begin{proof}
See Appendix \ref{appendix:proof theorem 1}.
\end{proof}

It is interesting to take note that the optimal transmission policy
given in Theorem \ref{theorem:1} has a {\it two-phase} structure,
which is explained as follows in more details.

\begin{itemize}
\item $0<t \leq t_{i_{ee}}$: In the first phase, the optimal
transmission is an on-off one with a constant power $P_{ee}$ for all
the on-periods. Note that $P_{ee}$ is the EE-maximizing power
allocation given in (\ref{optimalEE}). Also note that the optimal
on-periods ${l_i^{{\rm{on}}*}}, i = 1,\ldots,i_{ee}$, may not be
unique provided that they satisfy the conditions given in
(\ref{optimal02}) and (\ref{optimal03}). Without loss of generality,
we assume that in each epoch, the transmitter chooses to be on
initially with power $P_{ee}$ provided that its stored energy is not
used up, i.e., ${l_1^{{\rm{on}}*}}=\min(E_0/(P_{ee}+\alpha),L_1)$,
${l_2^{{\rm{on}}*}}=\min\left((E_1+E_0-(P_{ee}+\alpha)l_1^{{\rm{on}}*})/(P_{ee}+\alpha),L_2\right)$,
and so on.

\item $t_{i_{ee}}<t\leq T$: In the second phase, a continuous
transmission is optimal. Since ${l_i^{{\rm{on}}*}} = L_i, ~ i =
i_{ee}+1,\ldots, N$, the problem in (\ref{eq4}) for $i = i_{ee}+1,\ldots,
N$, is reduced to
\begin{align}
\mathop {\max }\limits_{\{P_i\}} ~& \sum \limits _{i=i_{ee}+1}^{N}L_iR(P_i) \nonumber \\
{\rm{s.t.}} ~ & P_i>0, ~ i=i_{ee}+1,\ldots, N \nonumber \\
& \sum \limits_{j=i_{ee}+1}^{i}L_jP_j \le \sum \limits
_{j=i_{ee}}^{i-1} E_j- \sum \limits_{j=i_{ee}+1}^{i}L_j\alpha, \nonumber \\&~~~~~~~~~~~~~~~~~~~~~~~~~
i=i_{ee}+1,\ldots, N.
\end{align}
The optimal solution for the above problem has been shown in
\cite{Yang,Zhang} to follow a non-decreasing piecewise-constant
(staircase) function, which is given in (\ref{optimal05}). It is
worth noting that the staircase power allocation achieves the
maximum SE for an equivalent AWGN channel subject to a sequence of
energy harvesting power constraints (modified to take into account
the circuit power $\alpha$) for $i = i_{ee}+1,\ldots, N$.
\end{itemize}

From the above discussion, it is revealed that for the throughput
maximization in an energy harvesting transmission system subject to
the non-ideal circuit power, the optimal transmission unifies both
the EE and SE maximization policies independently developed in
\cite{Miao} and \cite{Yang,Zhang}, respectively. To summarize, one
algorithm for solving the problem in (\ref{eq4}) for the general case of
$N\geq 1$ is given in Table I.

\begin{table}[!ht]
\renewcommand{\arraystretch}{1.3}
\caption{Optimal Off-Line Policy for Single-Channel Case}
\label{table1} \centering
\begin{tabular}{|p{3.2in}|}
\hline
\textbf{Algorithm}\\
\hline 1)  Calculate $P_{ee}$ and obtain $i_{ee}$ as
{
\begin{align}
i_{ee,0} &= 0,\nonumber\\
i_{ee,j} &= \min \bigg\{i \bigg| \frac{\sum \nolimits
_{k=i_{ee,j-1}}^{i-1} E_k}{\sum \nolimits _{k=i_{ee,j-1}+1}^{i}L_k} - \alpha \le P_{ee},\bigg.\nonumber\\&\bigg.~~~~~~~~~~~~~~~~~~
i=i_{ee,j-1}+1,\ldots,N \bigg \}, j\ge 1,\nonumber\\
J &= \arg \max\limits_{j\ge 0}i_{ee,j},~~~i_{ee} = i_{ee,J}.\nonumber
\end{align} }
2) For the first $i_{ee}$ epochs, set
\[P_i^* = P_{ee}, i=1, \ldots, i_{ee},\]
\[l_{1}^{\rm on*} = \min\left(\frac{{E_0}}{P_{ee}+\alpha},L_1\right)\]
\[l_{i}^{\rm on*} = \min\left(\frac{\sum\nolimits_{j=1}^{i-1}{E_j}}{P_{ee}+\alpha} - \sum\nolimits_{j=1}^{i-1}{l_{j}^{\rm on*}},L_i\right), i=2, \ldots, i_{ee}.\]

3) If $i_{ee} = N$, algorithm ends; otherwise, set
\[l_{i}^{\rm on*} = L_i, i=i_{ee}+1, \ldots, N.\]

4) Reset
\[ \begin{array}{l}
\displaystyle E_0' \gets 0, T' \gets T- \sum \nolimits_{j=1}^{i_{ee}} L_j, N' \gets N -i_{ee},\\
L_j' \gets L_{j+i_{ee}}, E_{j}' \gets E_{j+i_{ee}}, j=1,\ldots,N'.\\
\end{array}
\]
5) Determine
\[i_{\min} = \arg \mathop {\min }\limits_{j} \left\{ {\frac{{\sum\nolimits_{k = 0 }^{j - 1} {{E_k'}} }}{{\sum\nolimits_{k = 0}^j {{L_k'}} }} - \alpha } \right\}\]
\[P_t = \left\{ {\frac{{\sum\nolimits_{k = 0 }^{i_{\min} - 1} {{E_k'}} }}{{\sum\nolimits_{k = 0}^{i_{\min}} {{L_k'}} }} - \alpha } \right\},\]
and set transmit power as $P_t$ in the next $i_{\min}$ epochs.\\

6) If $i_{\min} = N'$, algorithm ends; otherwise, reset the
parameters as follows, and go to 5).
\[ \begin{array}{l}
\displaystyle E_0' \gets 0, T' \gets T'- \sum \nolimits_{j=1}^{i_{\min}} L_j', N' \gets N' -i_{\min},\\
L_j' \gets L_{j+i_{\min}}', E_{j}' \gets E_{j+i_{\min}}', j=1,\ldots,N'.\\
\end{array}
\]
\\
 \hline
\end{tabular}
\end{table}

\begin{remark}\label{remark:3.1}
We discuss some implementation issues on energy harvesting communication systems with the optimal transmit power allocation given in Theorem \ref{theorem:1}. It is worth noting that in practical wireless systems, the duration of a communication block is usually on the order of
millisecond, while the energy harvesting process evolves at a much slower speed, e.g., solar and wind power typically remains constant over windows of seconds. As a result, each epoch between any two consecutive energy arrivals in our model (during which the optimal power policy in Theorem \ref{theorem:1} allocates a constant power) can be assumed to be sufficiently long, thus containing many communication blocks. In each communication block, pilot signals can be transmitted to help estimate the signal power at the receiver, which may change from one epoch to another due to transmit power adaptation; thus, the transmission rate in (\ref{sys1}) is practically achievable with AMC at the transmitter and coherent detection at the receiver.
\end{remark}

\begin{example}\label{example:1}
To illustrate the optimal two-phase transmission given in Theorem
\ref{theorem:1}, we consider an example of a band-limited AWGN channel
with bandwidth $W = 1$MHz and the noise power spectral density $N_0
= 10^{-16}$Watts-per-Hz (W/Hz). We assume that the attenuation power
loss from the transmitter to the receiver is $h=-$80dB. Considering
the channel capacity with $\Gamma=1$, we thus have $R(P) =
W\log_2(1+\frac{Ph}{\Gamma N_0W}) = \log_2(1+100P)$Mbps. It is assumed that
the energy arrives at time instants $[0, 4, 6, 11, 14, 16, 18]$sec,
and the corresponding energy values are [0.5, 0.5, 0.5, 1, 0.5,
0.75, 0.5]Joule, as shown in Fig.\ref{fig:offline}. It is also
assumed that $T=20$secs and the circuit power is $\alpha = $115.9mW.
Under this setup, we compute $P_{ee} = 79.2$mW. In Fig.
\ref{fig:offline}, we compare the optimal allocation of the total
consumed transmitter power by the algorithm in Table I with that
obtained by the algorithm given in \cite{Yang,Zhang}. Note that the
algorithm in \cite{Yang} or \cite{Zhang} solves (\ref{eq4}) in the special
case with the ideal circuit power $\alpha=0$. Here, we apply this
algorithm to obtain a suboptimal power allocation with $\alpha>0$,
by assuming that the transmitter is always on, i.e.,
${l_i^{{\rm{on}}}} = L_i, ~ i = 1,\ldots, N$. As observed in Fig.
\ref{fig:offline}(a), the optimal power allocation has a two-phase
structure, i.e., an on-off transmission with transmit power $P_{ee}$
when $0< t\le t_3$ followed by a continuous transmission with
non-decreasing staircase power allocation when $t_3< t\le T$, which
is in accordance with Theorem \ref{theorem:1}. In contrast, as
observed in Fig. \ref{fig:offline}(b), the suboptimal power
allocation by the algorithm in \cite{Yang} or \cite{Zhang} with the transmitter
always on results in a continuous transmission with non-decreasing
staircase power allocation during the entire block i.e. $0< t\leq T$.
In addition, it can be shown that the proposed optimal solution
achieves the total throughput 63.14Mbits, while the suboptimal
solution achieves only 55.80Mbits, over $T=20$secs.
\end{example}

\begin{figure}
\centering
 \epsfxsize=1\linewidth
    \includegraphics[width=8cm]{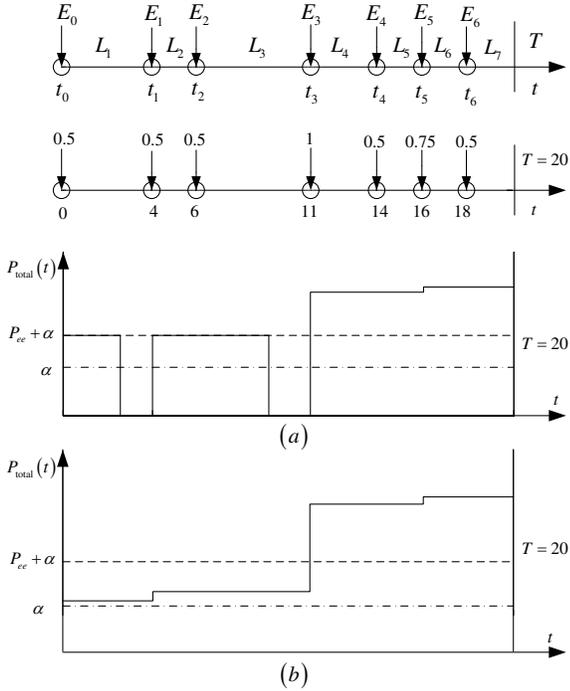}
\caption{Power allocation by off-line policies: (a) the optimal
off-line policy; and (b) the off-line policy in \cite{Yang,Zhang}.}
\label{fig:offline}
\end{figure}

\section{Multi-Channel Optimization}\label{sec:multi-channel}

In this section, we extend the optimal off-line power allocation for
the single-channel case to the general case with multiple parallel
AWGN channels subject to a total energy harvesting power constraint
and the non-ideal circuit power consumption at the transmitter. The
multi-channel setup is applicable when the communication channel is
decomposable into orthogonal channels by joint transmitter and
receiver signal processing such as OFDM (orthogonal frequency
division multiplexing) and/or MIMO (multiple-input multiple-output).

Without loss of generality, we assume a power vector ${\bf Q}(t) =
[Q_1(t),\ldots, Q_K(t)]\succeq 0, t\in(0,T]$, with each element
denoting the power allocation over time in one of a total $K$
parallel AWGN channels, where ${\bf Q}(t) \succeq 0$ denotes that ${\bf Q}(t)$
is elementwise no smaller than zero.
We also assume a sum-throughput over the $K$
channels, denoted by $C(t) = R({\bf Q}(t))$, which satisfies
\begin{enumerate}
\item $R({\bf Q}(t)) \ge 0, \forall {\bf Q}(t)\succeq 0$, and $R({\bf{0}})=0$;
\item $R({\bf Q}(t))$ is a strictly joint concave function over ${\bf
Q}(t)\succeq 0$;
\item $R({\bf Q}(t))$ is a monotonically increasing function with respect to each argument in ${\bf
Q}(t)$ i.e. $Q_k(t)\geq 0, k=1,\ldots,K$.
\end{enumerate}
An example of the above multi-channel sum-throughput is the sum-rate
over $K$ parallel AWGN channels achieved by joint AMC, which is given by
\begin{equation} \label{MultiChannel:1}
\begin{array}{l}
\displaystyle R({\bf Q}(t)) = W \sum \limits_{k=1}^K
\log_2\left(1+\frac{h_k Q_k(t)}{\Gamma WN_0}\right),
\end{array}
\end{equation}
where $h_k\geq 0$ denotes the channel power gain of the $k$th
channel. Similar to (\ref{eq102}) in the single-channel case, by
taking into account the non-ideal circuit power $\alpha$, the total
power consumed at the transmitter for the multi-channel case is
modeled by
\begin{align} \label{Gene:1}
Q_{\rm{total}}(t) = \left\{ \begin{array}{ll}
\sum \limits_{k=1}^{K} Q_k(t) + \alpha, & \sum \limits _{k=1}^{K} Q_k(t) > 0 \\
0 & \sum \limits _{k=1}^{K} Q_k(t) = 0. \end{array} \right.
\end{align}
Then the throughput maximization problem over a finite horizon $T$
in the multi-channel case is formulated as
\begin{align} \label{Gene:2}
\mathop {\max }\limits_{{\bf{Q}}(t)\succeq 0} ~&
\int_0^T R({\bf{Q}}(t)) {\rm{d}}t \nonumber \\
{\rm s.t.} ~& \int_0^{t_i} {{Q_{{\rm{total}}}}\left( t \right)}
{\rm{d}}t \le \sum\limits_{j = 0}^{i - 1} {{E_j}}, ~i=1,\ldots,N.
\end{align}
Similar to (\ref{eq105}), the above problem is non-convex
with $\alpha>0$ and thus cannot be solved by standard convex
optimization techniques. In the following, we will apply the principle of
{\it nested optimization} to convert this problem with
multi-dimensional (vector) power optimization to an equivalent
problem with only one-dimensional (scalar) power optimization, which
is then optimally solvable by the algorithm in Table I for the
single-channel case.

To apply the nested optimization, we first introduce an auxiliary
variable $P(t) = \sum \limits_{k=1}^K Q_k(t)$, and rewrite the
objective function of (\ref{Gene:2}) equivalently as
\begin{align} \label{Gene:2:object:1}
& \mathop {\max }\limits_{ P \left( t \right)\geq 0} \max
\limits_{{\bf{Q}}(t):{Q}_k(t)\geq 0, \forall k, \sum \limits_{k=1}^K
Q_k(t) \leq P(t)} \int_0^T  R({\bf{Q}}(t)) {\rm{d}}t \nonumber \\
 =& \mathop {\max }\limits_{ P\left( t \right)\geq 0} \int_0^T \max
\limits_{{\bf{Q}}(t):{Q}_k(t)\geq 0, \forall k, \sum \limits_{k=1}^K
Q_k(t) \leq P(t)} R({\bf{Q}}(t)) {\rm{d}}t.
\end{align}
Define the auxiliary function
\begin{equation} \label{Gene:3}
\begin{array}{l}
\displaystyle \bar  R(P(t)) = {\mathop {\max }\limits_{{\bf
Q}(t):{Q}_k(t)\geq 0, \forall k,  \sum \limits_{k=1}^K Q_k(t) \leq
P(t)}} R({\bf{Q}}(t))
 \end{array}
\end{equation}
for which it can be easily verified that the maximum is attained
when  $\sum \limits_{k=1}^K Q_k(t) = P(t)$. Thus, without loss of
generality, we can rewrite (\ref{Gene:2}) equivalently as
\begin{align} \label{Gene:4}
\mathop {\max }\limits_{ P\left( t \right)\geq 0} ~&
\int_0^T  \bar R( P(t)) {\rm{d}}t \nonumber \\
{\rm s.t.} ~& \displaystyle\int_0^{t_i} {{P_{{\rm{total}}}}\left( t
\right)} {\rm{d}}t \le \sum\limits_{j = 0}^{i - 1} {{E_j}},
~i=1,\ldots,N
 \end{align}
where \begin{equation} \label{gene06}\begin{array}{l}{
P_{{\rm{total}}}}(t) = \left\{ {\begin{array}{*{20}{c}}
\displaystyle {  P(t) + {\alpha},}\\
{{0},}
\end{array}} \right.\begin{array}{*{20}{l}}
{  P(t) > 0}\\
{  P(t) = 0}.
\end{array}\end{array}
\end{equation}
Thus, the original problem with vector power optimization is
converted by the nested optimization to an equivalent problem with
only scalar power optimization. Thereby, we can first solve
(\ref{Gene:4}) to get the optimal solution of $P(t)$, and then with
the obtained $P(t)$ solve (\ref{Gene:3}) to find the optimal
solution of ${\bf Q}(t)$ for (\ref{Gene:2}). Since the problem in
(\ref{Gene:3}) is a convex optimization problem, it can be solved by
standard techniques e.g. the Lagrange duality method \cite{Boydbook}
(in the special case of the sum-rate given in
(\ref{MultiChannel:1}), the optimal solution can be obtained by the
well-known ``water-filling'' algorithm \cite{Boydbook}).

In order to solve (\ref{Gene:4}), we first give the
following proposition.
\begin{proposition}\label{proposition:multichannel}
The function $\bar R(P(t))$ satisfies the following properties:
\begin{enumerate}
\item $\bar R(P(t)) \ge 0, \forall P(t)\geq 0$, and $\bar{R}(0)=0$;
\item $\bar R(P(t))$ is a strictly concave function over $P(t)\geq 0$;
\item $\bar R(P(t))$ is a monotonically increasing function over $P(t)\geq 0$.
\end{enumerate}
\end{proposition}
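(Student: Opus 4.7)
The plan is to inherit each property of $\bar R(\cdot)$ directly from the corresponding property of the underlying per-vector rate $R(\mathbf{Q}(t))$, using the fact that $\bar R$ is the optimal value function of a parametric convex program with the scalar sum-power cap $P(t)$ as parameter.

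For property (1), non-negativity of $\bar R(P(t))$ is immediate because $\mathbf{Q}(t) = \mathbf{0}$ is always feasible in (\ref{Gene:3}), and $R(\mathbf{0}) = 0$, so the maximum is at least $0$. For the identity $\bar R(0)=0$, note that $P(t)=0$ together with $Q_k(t)\ge 0$ forces $\mathbf{Q}(t)=\mathbf{0}$, so the only feasible point delivers $R(\mathbf{0}) = 0$. For property (3), I would just observe that if $0 \le P_1 \le P_2$, then any $\mathbf{Q}$ feasible for the sum-power cap $P_1$ is also feasible for $P_2$, so the feasible set is nested and $\bar R(P_1) \le \bar R(P_2)$; strict monotonicity follows from the strict componentwise monotonicity of $R$ combined with the fact that one may always redistribute the extra budget $P_2 - P_1$ to a strictly larger $\mathbf{Q}$.

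The main step, and the only one requiring some care, is strict concavity in property (2). First I would argue that at any optimum the sum-power constraint is tight, i.e.\ $\sum_{k=1}^K Q_k^\ast(t) = P(t)$; otherwise componentwise monotonicity of $R$ would allow a strict improvement by increasing some $Q_k$, contradicting optimality. Fix $P_1 \ne P_2$ with $P_1, P_2 \ge 0$ and $\theta \in (0,1)$, and let $\mathbf{Q}_1^\ast, \mathbf{Q}_2^\ast$ be the corresponding maximizers in (\ref{Gene:3}). Tightness at optimum gives $\sum_k Q_{1,k}^\ast = P_1 \ne P_2 = \sum_k Q_{2,k}^\ast$, so $\mathbf{Q}_1^\ast \ne \mathbf{Q}_2^\ast$. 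The convex combination $\theta \mathbf{Q}_1^\ast + (1-\theta)\mathbf{Q}_2^\ast$ is elementwise non-negative and sums to $\theta P_1 + (1-\theta) P_2$, hence is feasible for the cap $\theta P_1 + (1-\theta) P_2$. Therefore
\begin{align*}
\bar R(\theta P_1 + (1-\theta)P_2) &\ge R\bigl(\theta \mathbf{Q}_1^\ast + (1-\theta)\mathbf{Q}_2^\ast\bigr) \\
&> \theta R(\mathbf{Q}_1^\ast) + (1-\theta) R(\mathbf{Q}_2^\ast) \\
&= \theta \bar R(P_1) + (1-\theta)\bar R(P_2),
\end{align*}
where the strict inequality uses the assumed strict joint concavity of $R$ together with $\mathbf{Q}_1^\ast \ne \mathbf{Q}_2^\ast$. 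This delivers strict concavity on $P \ge 0$.

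The only non-routine part is ensuring the maximizers differ so that the strict-concavity inequality can be invoked; the tightness-at-optimum argument (via the componentwise monotonicity of $R$) is the key enabler, and everything else is a direct transfer of the corresponding property from $R$ to its parametric maximum $\bar R$.
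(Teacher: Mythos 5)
Your proof is correct, but it takes a genuinely different route from the paper for the one non-trivial part, namely strict concavity. The paper establishes property (2) via Lagrange duality (following the technique of \cite[Appendix B]{ZhangCR}): it invokes Slater's condition to write $\bar R(P)$ as the optimal value of a min--max problem, $\bar R(P) = \min_{\mu\ge 0}\max_{\mathbf{Q}\succeq 0} R(\mathbf{Q}) - \mu(\sum_k Q_k - P)$, and then compares the dual-optimal multipliers $\mu^{(P_1)}, \mu^{(P_2)}, \mu^{(P_3)}$ at the three points $P_1$, $P_2$, $P_3 = \omega P_1 + (1-\omega)P_2$, using uniqueness of the optimizers to extract the strict inequality. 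You instead argue entirely in the primal: you first show the sum-power constraint is tight at any optimum (via componentwise monotonicity of $R$), conclude that the maximizers $\mathbf{Q}_1^\ast \ne \mathbf{Q}_2^\ast$ whenever $P_1 \ne P_2$, note that $\theta\mathbf{Q}_1^\ast + (1-\theta)\mathbf{Q}_2^\ast$ is feasible for the cap $\theta P_1 + (1-\theta)P_2$, and apply strict joint concavity of $R$ directly. Your argument is more elementary and arguably cleaner: it needs no constraint qualification, no zero-duality-gap step, and sidesteps the paper's somewhat delicate claim that the dual-optimal $\mu^{(P)}$ is unique (which, strictly speaking, requires more than strict concavity of $R$ in $\mathbf{Q}$). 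The tightness-at-optimum observation is exactly the right enabler for forcing $\mathbf{Q}_1^\ast \ne \mathbf{Q}_2^\ast$, which is the only place where strictness could otherwise fail. The duality route in the paper does have the side benefit of exhibiting $\mu^{(P)}$ as a supergradient of $\bar R$, which is reused implicitly when solving (\ref{Gene:3}) by the water-filling algorithm, but that is not needed for the proposition itself. Your treatments of properties (1) and (3) match what the paper asserts without detail and are fine.
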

\begin{proof}
See Appendix \ref{appendix:proof propostion multichannel}.
\end{proof}

Since $\bar R(P(t))$ satisfies the same conditions as $R(P(t))$ for
the single-channel case, it follows that (\ref{Gene:4}) can
be similarly solved by the algorithm in Table I, with one minor
modification: the EE-maximizing power allocation in the
multi-channel case needs to be obtained as
\begin{align} \label{Gene:8}
P_{ee}&= \arg \mathop {\max }\limits_{{P}> 0} \frac{\bar
R(P)}{P+\alpha}\nonumber\\& = \arg \mathop {\max }\limits_{ {P}> 0}
\frac{\mathop {\max }\limits_{{\bf{Q}}:Q_k \ge 0, \forall k,
\sum\limits_{k = 1}^K {{Q_k}} \le P} R({\bf{Q}})}{P+\alpha}.
\end{align}
Since the maximum in the above problem is attained by
$\sum\limits_{k = 1}^K {{Q_k}}=P$, the optimal solution of ${\bf Q}$
is obtained as
\begin{equation} \label{Gene:10}
\begin{array}{l}\displaystyle
{\bf{Q}}^{ee}= \arg \mathop {\max }\limits_{{\bf{Q}}:Q_k \ge 0,
\forall k, \sum\limits_{k = 1}^K {{Q_k}} \leq P} \frac{
R({\bf{Q}})}{\sum\limits_{k = 1}^K {{Q_k}}+\alpha}
 \end{array}
\end{equation}
with ${\bf{Q}}^{ee}= [Q^{ee}_1, \ldots, Q^{ee}_K]$, and
\begin{equation} \label{Gene:11}\begin{array}{l}\displaystyle
P_{ee}= \sum\limits_{k = 1}^K {{Q^{ee}_k}}.
 \end{array}
\end{equation}
Since the RHS of (\ref{Gene:10}) is a quasi-concave function, this
problem is quasi-convex and thus can be efficiently solved by the
bisection method \cite{Boydbook}. Here we omit the detail for
brevity.

To summarize, the algorithm for solving (\ref{Gene:2}) for
the multi-channel case is given in Table II.

\begin{table}[!ht]
\renewcommand{\arraystretch}{1.3}
\caption{Optimal Off-Line Policy for Multi-Channel Case}
\label{table3} \centering
\begin{tabular}{|p{3.2in}|}
\hline
\textbf{Algorithm}\\
\hline
1) Obtain $P_{ee}$ by solving (\ref{Gene:10}) and (\ref{Gene:11}); apply the algorithm in Table \ref{table1} to obtain the solution $P^*(t)$ for (\ref{Gene:4}).\\

2) With the obtained $P^*(t)$, solve  (\ref{Gene:3})  to
obtain the solution ${\bf{Q}}^*(t)$ for  (\ref{Gene:2}). \\
\hline
\end{tabular}
\end{table}

\section{Online Algorithm}\label{sec:online}

In the previous two sections, we have studied the optimal off-line
policies assuming the non-causal ESI at the transmitter, which
provide the throughput upper bound for all online policies. In this
section, we will address the practical online case with only the
causal (past and present) ESI assumed to be known at the
transmitter. In particular, we will propose an online policy based
on the structure of the optimal off-line policy obtained previously
in Section \ref{sec:offline}. Due to the space limitation, we will
only consider the single-channel case for the study of online
algorithms, while similar results can be obtained for the general
multi-channel case, based on the optimal off-line policy given in
Section \ref{sec:multi-channel}.

\subsection{Proposed Online Algorithm}

For the purpose of exposition, we assume that the harvested energy
is modeled by a compound Poisson process, where the number of energy
arrivals over a horizon $T$ follows a Poisson distribution with mean
$\lambda_eT$ and the energy amount in each arrival is independent
and identically (i.i.d.) distributed with mean $\bar{E}$. It is
assumed that $\lambda_e$ and $\bar{E}$ are known at the transmitter.

We propose an online power allocation algorithm based on the
structure of the optimal off-line solution revealed in Theorem
\ref{theorem:1}. Specifically, considering the start time of each
block, from Theorem \ref{theorem:1}, we obtain the closed-form
solution for the optimal off-line power allocation at $t=0$ in the
following proposition.
\begin{proposition}\label{proposition:online}
Suppose there are $N-1$ energy arrivals in $(0,T)$ with $N\geq 1$,
the optimal off-line power allocation solution for
(\ref{eq4}) at $t=0$ is given by
\begin{equation} \label{Online1}
\begin{array}{l}
\displaystyle {P^*}(0) =
\max\left(\min_{i=1,\ldots,N}\left(\frac{\sum
\nolimits_{k=0}^{i-1}E_k}{\sum \nolimits_{k=1}^{i}L_k} -
\alpha\right), P_{ee}\right).
\end{array}
\end{equation}
\end{proposition}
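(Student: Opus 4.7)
The plan is to deduce the closed form directly from Theorem~\ref{theorem:1} by splitting into two cases according to whether $i_{ee}\ge 1$ or $i_{ee}=0$. Let me abbreviate
\[
f_i \;:=\; \frac{\sum_{k=0}^{i-1}E_k}{\sum_{k=1}^{i}L_k} - \alpha,\qquad i=1,\ldots,N,
\]
and $m := \min_{1\le i\le N} f_i$, so the claim is $P^*(0)=\max(m,P_{ee})$.

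First I would handle the case $i_{ee}\ge 1$. By the definition of $i_{ee,1}$ in (\ref{SinEpoch:12}), the very existence of $i_{ee}\ge 1$ forces $f_{i_{ee,1}}\le P_{ee}$, hence $m\le P_{ee}$ and $\max(m,P_{ee})=P_{ee}$. On the other hand, epoch $1$ lies in the first (EE-maximizing) phase, so by (\ref{optimal01}) the on-power is $P_{ee}$; by the convention stated immediately after Theorem~\ref{theorem:1}, the transmitter is assumed to start each epoch in the on-state as long as stored energy is available, so $P^*(0)=P_{ee}$, matching the formula.

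Next I would treat the case $i_{ee}=0$. By the definition of $i_{ee,1}$, $i_{ee}=0$ means $\{i:f_i\le P_{ee}\}=\emptyset$, i.e.\ $f_i>P_{ee}$ for every $i$, and therefore $m>P_{ee}$ so $\max(m,P_{ee})=m$. Since phase one is empty, epoch $1$ belongs to the second (SE-maximizing) phase. Applying (\ref{optimal05})--(\ref{optimal06}) with $n_0=i_{ee}=0$, the first staircase level is
\[
P_1^* \;=\; \frac{\sum_{k=0}^{n_1-1}E_k}{\sum_{k=1}^{n_1}L_k}-\alpha \;=\; f_{n_1},\qquad
n_1 \;=\; \arg\min_{1\le j\le N} f_j,
\]
so $P_1^*=m$. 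Since the transmitter is continuously on in phase two, $P^*(0)=P_1^*=m=\max(m,P_{ee})$.

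The only mildly delicate step is the book-keeping in Case~1 to justify that $P^*(0)=P_{ee}$ rather than $0$: one has to invoke the explicit tie-breaking rule $l_1^{\mathrm{on}*}=\min(E_0/(P_{ee}+\alpha),L_1)$ adopted right after Theorem~\ref{theorem:1}, which ensures that transmission indeed begins at $t=0^{+}$ whenever $E_0>0$. With that convention in hand, combining the two cases yields $P^*(0)=\max(m,P_{ee})$, which is (\ref{Online1}).
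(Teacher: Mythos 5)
Your proposal is correct and follows essentially the same route as the paper's proof in Appendix F: both split on whether $i_{ee}\ge 1$ or $i_{ee}=0$, use the definition of $i_{ee,1}$ in (\ref{SinEpoch:12}) to show the minimum term is at most $P_{ee}$ in the first case (so the max equals $P_{ee}$, matching the EE phase) and exceeds $P_{ee}$ in the second (so the max equals the first staircase level from (\ref{optimal05})--(\ref{optimal06})). Your explicit appeal to the tie-breaking convention for $l_1^{\mathrm{on}*}$ is a welcome clarification of a point the paper leaves implicit, but it does not change the argument.
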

\begin{proof}
See Appendix \ref{appendix:proof propostion online}.
\end{proof}
Note that in (\ref{Online1}), $E_0$ is available at the transmitter
at $t=0$, while $N$, $E_i, i=1,\ldots,N-1$, and $L_i, i=1,\ldots,N$,
are all unknown at the transmitter due to the causal ESI. As a
result, we cannot compute ${P^*}(0)$ in (\ref{Online1}) at $t=0$ for
the online policy. Nevertheless, we can approximate the expression
of ${P^*}(0)$ based on the statistical knowledge of the energy
arrival process, i.e., $\lambda_e$ and $\bar{E}$, as follows.

Denote \[\displaystyle \frac{{\sum \nolimits_{k=0}^{i-1}E_k}}{{\sum
\nolimits_{k=1}^{i}L_k}} = \frac{E_0 + {\sum
\nolimits_{k=1}^{i-1}E_k}}{{\sum \nolimits_{k=1}^{i}L_k}} =
\frac{E_0}{{\sum \nolimits_{k=1}^{i}L_k}} + \frac{{\sum
\nolimits_{k=1}^{i-1}E_k}}{{\sum \nolimits_{k=1}^{i}L_k}}.\]
For any $i\le N$, ${\sum \nolimits_{k=1}^{i-1}E_k}$ is the total energy
harvested during $(0,t_i)$ and ${\sum \nolimits_{k=1}^{i}L_k}=t_i$. We
thus have
\begin{equation} \label{Online2}
\frac{{\sum \nolimits_{k=1}^{i-1}E_k}}{{\sum
\nolimits_{k=1}^{i}L_k}} \approx
\frac{\lambda_et_i\bar{E}}{t_i}=\lambda_e\bar{E}, \ \forall 1< i\le N,
\end{equation}
where the approximation becomes exact when $t_i\rightarrow \infty$.

Using (\ref{Online2}), we can approximate
$\displaystyle \frac{{\sum \nolimits_{k=0}^{i-1}E_k}}{{\sum
\nolimits_{k=1}^{i}L_k}}$ as
$\displaystyle
\frac{E_0}{{\sum \nolimits_{k=1}^{i}L_k}} + \lambda_e\bar{E},$
and \begin{align*}
&\min_{i=1,\ldots,N}\left(\frac{\sum
\nolimits_{k=0}^{i-1}E_k}{\sum \nolimits_{k=1}^{i}L_k} -
\alpha\right) \\ \approx &\min_{i=1,\ldots,N}\left(\frac{E_0}
{{\sum \nolimits_{k=1}^{i}L_k}} + \lambda_e\bar{E} -
\alpha\right) = \frac{E_0}{T}
+ \lambda_e \bar{E} - \alpha,
\end{align*}
and then obtain
\begin{equation} \label{Online:4}
\begin{array}{l}
\displaystyle {P^*}(0) \approx \displaystyle \max\left(\frac{E_0}{T}
+ \lambda_e \bar{E} - \alpha, P_{ee}\right).
\end{array}
\end{equation}
Since $E_0$, $\lambda_e$ and $\bar{E}$ are all known at the
transmitter at $t=0$, (\ref{Online:4}) can be computed in real time.

For any $0<t<T$, by denoting the stored energy as $E_s(t)$ with
$E_s(0) = E_0$, we can view the online throughput maximization at
time $t$ over the remaining time $T-t$ to have an initial stored
energy $E_0=E_s(t)$. Therefore, by replacing $T$ and $E_0$ in
(\ref{Online:4}) as $T-t$ and $E_s(t)$, respectively, we obtain the
following online transmit power allocation policy:
\begin{align}\label{Online:5}
{P}_{\rm online}(t) = \left\{ \begin{array}{ll}
\max\left(\frac{E_s(t)}{T-t} + \lambda_e\bar{E} - \alpha,
P_{ee}\right), & E_s(t)>0 \\ 0, & E_s(t)=0 \end{array} \right.
\end{align}
for any $t\in[0,T)$.

The online policy in (\ref{Online:5}) provides some useful insights.
Note that $\frac{E_s(t)}{T-t} + \lambda_e\bar{E} - \alpha$ can be
viewed as the ``expected'' available transmit power for the
remaining time in each block, which can be negative for some $t$ if
$\frac{E_s(t)}{T-t} + \lambda_e\bar{E} < \alpha$. Thus, if this
value is less than the EE-maximizing power allocation $P_{ee}$, the
transmitter should transmit with $P_{ee}$ to save energy; however,
if the inequality is reversed, the transmitter should transmit more
power to maximize the SE. Moreover, as compared to the optimal
off-line power allocation for the single-epoch case given in
Proposition \ref{Proposition:1}, we see that the online policy
(\ref{Online:5}) bears a similar structure, by noting that
$E_0/T-\alpha$ in (\ref{SinEpoch:2}) for the single-epoch case is
also the available transmit power for the remaining time in each
block. Last, it is worth remarking that the online power
allocation policy in (\ref{Online:5}) is expressed as a function of the
continuous time for convenience; however, in practice, this policy needs
to be implemented in discrete time steps by properly quantizing
the continuous-time function. The time step needs to be carefully chosen
in implementation: On one hand, it is desirable to use smaller
step values to achieve higher quantization accuracy for energy saving,
while on the other hand, the time step needs to be sufficiently large,
i.e., at least  larger than one communication block (cf. Remark \ref{remark:3.1})
so that the receiver can have a timely estimate of any transmit power adjustment.

\begin{figure}
\centering
 \epsfxsize=1\linewidth
    \includegraphics[width=8cm]{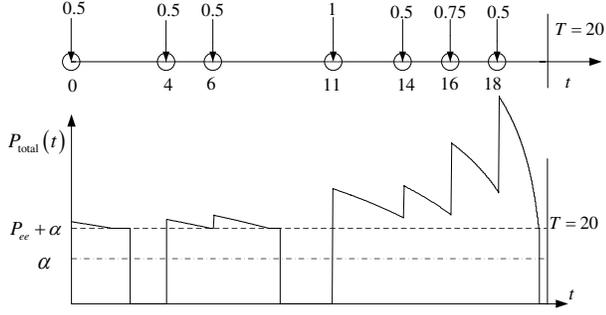}
\caption{Power allocation by the proposed online policy.}
\label{fig:online}
\end{figure}

\begin{example}
To illustrate the proposed online power allocation policy in
(\ref{Online:5}), we consider the same channel setup and harvested
energy process for the off-line case in Example \ref{example:1} (cf.
Fig. \ref{fig:offline}). In Fig. \ref{fig:online}, we show the total
transmitter power consumption by the proposed online policy assuming
that the exact average harvested power $\lambda_e\bar{E} =
(\sum_{i=1}^{N-1}E_i)/T= 187.5$mW is known at the transmitter. It is
observed that the online power allocation is no more
piecewise-constant like the optimal off-line power allocation in
Fig. \ref{fig:offline}(a). Nevertheless, it is also observed that
these two policies result in some similar power allocation patterns,
i.e., starting with an on-off power allocation followed by a
non-decreasing (in the sense of average power profile for the online
policy case) power allocation. This suggests that the proposed
online policy captures the essential features of the optimal
off-line policy. Moreover, it can be shown that the proposed online
policy achieves the total throughput 61.61Mbits over $T=20$secs,
which is only 1.53Mbits from 63.14Mbits of the optimal off-line
policy. In addition, it can be verified that the total throughput
obtained by the proposed online policy is very robust to the assumed
average harvested power value $\lambda_e\bar{E}$. For example, by
setting $\lambda_e \bar{E}$ to be 150mW or 200mW, the proposed
online policy obtains the throughput 61.38Mbits and 61.60Mbits,
respectively, which is a very small loss in either case.
\end{example}

\section{Simulation Results}\label{sec:simulation}

In the section, we compare the performance of the proposed online
policy with the performance upper bound achieved by the optimal off-line
policy under a stochastic energy harvesting setup modeled by the
compound Poisson process. The amount of energy in each energy
arrival is assumed to be independent and uniformly distributed between 0
and $2\bar{E}$. For the purpose of comparison, we also consider two
alternative heuristically designed online power allocation policies
given as follows.

\begin{itemize}
\item {\bf Energy Efficient Policy (EEP)}: In this online policy, the transmitter transmits with the EE-maximizing power allocation $P_{ee}$ given
in (\ref{optimalEE}) provided that there is a non-zero stored
energy, i.e.,
\begin{align}\label{Online:6}
{P}_{\rm EEP}(t) = \left\{ \begin{array}{ll} P_{ee}, & E_s(t)>0 \\
0, & E_s(t)=0 \end{array} \right.
\end{align}
for any $t\in[0,T)$.

\item {\bf Energy Neutralization Policy (ENP)}: This online policy transmits with a constant
power that satisfies the long-term energy consumption constraint if
there is available stored energy, i.e.,
\begin{align}\label{Online:7}
{P}_{\rm ENP}(t) = \left\{ \begin{array}{ll} \lambda_e \bar{E}  - \alpha, & E_s(t)>0 \\
0, & E_s(t)=0 \end{array} \right.
\end{align}
for any $t\in[0,T)$. Note that in the above we have assumed that
$\lambda_e \bar{E}>\alpha$.
\end{itemize}

\begin{figure}
\centering
 \epsfxsize=1\linewidth
    \includegraphics[width=8cm]{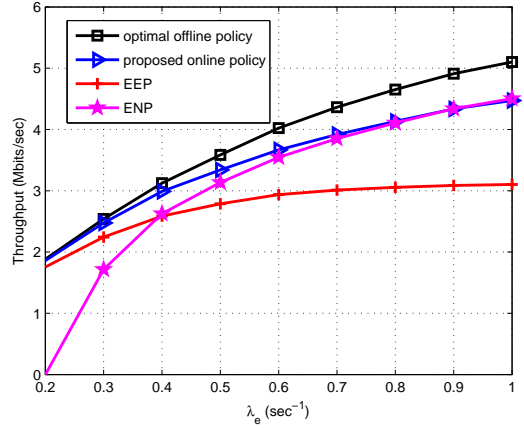}
\caption{Average throughput versus the energy arrival rate $\lambda_e$ with $\bar E = 0.5$J and $T=$20secs.} \label{fig:lambda}
\end{figure}

\begin{figure}
\centering
 \epsfxsize=1\linewidth
    \includegraphics[width=8cm]{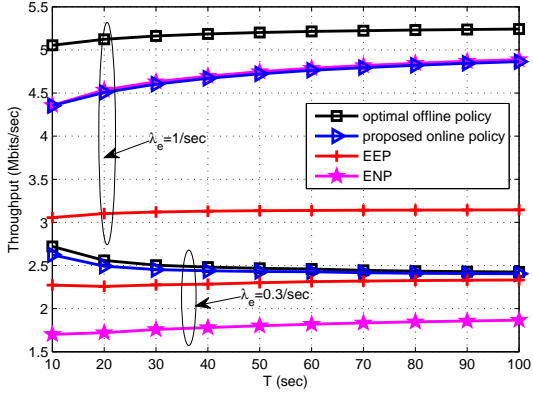}
\caption{Average throughput versus the block duration $T$ with $\bar
E = $0.5J.} \label{fig:horizon}
\end{figure}

\begin{figure}
\centering
 \epsfxsize=1\linewidth
    \includegraphics[width=8cm]{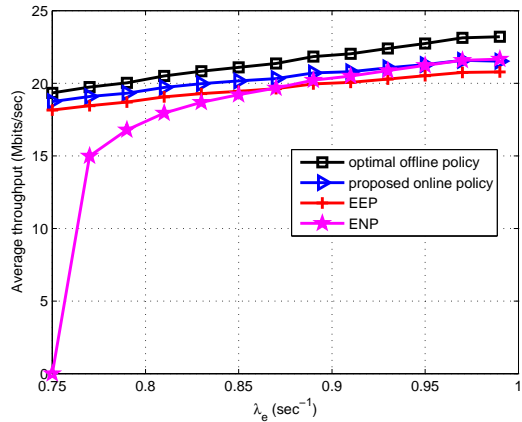}
\caption{Average throughput in a single-cell OFDMA downlink system with renewable powered BS versus the energy arrival rate $\lambda_e$ with $\bar E = 200$J and $T=$20secs.} \label{fig:OFDMA}
\end{figure}

First, we consider a single AWGN channel in
Figs. \ref{fig:lambda} and \ref{fig:horizon} with the same channel parameters
as for Example \ref{example:1}.
In Fig. \ref{fig:lambda}, we show the average throughput over
$T=20$secs versus $\lambda_e$, with $\bar{E}=0.5$Joule (J). It is
observed that when $\lambda_e$ is small, the proposed online policy
and EEP obtain similar performance as the optimal off-line policy.
Since the average harvested energy is small when $\lambda_e$ is
small, it is more likely that $P_{ee}$ is greater than both
$\frac{E_s(t)}{T-t} + \lambda_e E - \alpha$ (c.f. (\ref{Online:5})) and $\frac{\sum
\nolimits _{k=0}^{i-1} E_k}{\sum \nolimits _{k=1}^{i}L_k} - \alpha, \forall i=1,\ldots,N$ (c.f. (\ref{Online1})).
Thus, both the proposed online and off-line policies choose to
transmit with $P_{ee}$ during the ``on'' periods to save energy. However, for ENP, the
transmit power level deviates from $P_{ee}$ and thus significant
amount of energy is consumed due to the non-ideal circuit power; as
a result, the achievable throughput is almost zero. As $\lambda_e$
increases, the throughput gap between the optimal off-line policy
and all online policies enlarges, and the performance of EEP
degenerates severely. Moreover, ENP is observed to obtain a similar
performance as the proposed online policy. This is because in this
case, $\frac{E_s(t)}{T-t}$ is negligibly small as compared with
$\lambda_e E$, and as a result the proposed online policy
degenerates to ENP.

Fig. \ref{fig:horizon} shows the average throughput versus
$T$, for two different values of $\lambda_e=0.3$/sec and 1/sec, with
$\bar{E}=0.5$J. In both cases of  $\lambda_e$, the proposed online
policy is observed to perform close to the optimal off-line policy,
for all values of $T$. With small value of $\lambda_e$, i.e.,
$\lambda_e=0.3$/sec, EEP performs much better than ENP since it is
more energy efficient, while with a larger value of $\lambda_e$, i.e.,
$\lambda_e=1$/sec, the reverse becomes true, which can be similarly
explained as for Fig. \ref{fig:lambda}.

Furthermore, for the
multi-channel scenario, in Fig. \ref{fig:OFDMA} we evaluate the average
throughput of a single-cell downlink system with the base station (BS)
powered by energy harvesting. It is assumed that the BS covers a circular
area with radius 1000 meters, and serves $K$ users whose locations are generated following a spatial
homogeneous Poisson point process (HPPP) with density $10^{-6}$
users/m$^2$. Consider a simplified channel model without fading, in which
the channel power gain of each
user from the BS is determined by a pathloss model
$c_0\left(\frac{r}{r_0}\right)^{-\zeta}$, where $c_0 = -60$dB
is a constant equal to the pathloss at a reference distance $d_0 = 10$m,
and $\zeta = 3$ is the pathloss exponent. Assuming an OFDMA (orthogonal
frequency division multiplexing access) based user multiple access, a total
bandwidth $W = 5$MHz is equally allocated to the $K$ users. The
noise power spectral density at each user receiver is set as  $N_0 = -174$dBm/Hz and
$\Gamma=1$ is assumed.
The circuit power at the BS is set as $\alpha = 60$Watt.
Fig. \ref{fig:OFDMA} shows the average throughput versus $\lambda_e$
with $\bar E = 200$J and an energy scheduling period $T=$20secs.
It is observed that the proposed online policy always performs better
than EEP. With small values of $\lambda_e$, i.e., $\lambda_e\le0.8$/sec,
ENP preforms clearly worse than the proposed online algorithm, while when
$\lambda_e\ge0.95$/sec, both schemes perform similarly. This observation is  expected as
can be similarly explained for Fig.
\ref{fig:lambda}.

\section{Concluding Remarks}\label{sec:conclusion}

In this paper, we studied the throughput-optimal transmission
policies for energy harvesting wireless transmitters with the
non-ideal circuit power. We first obtained the optimal off-line
solution in the single-channel case, which is shown to have a new
two-phase transmission structure by unifying existing results on
separately maximizing energy efficiency and spectrum efficiency. We
then extended the optimal off-line solution to the general case with
multiple AWGN channels subject to a total energy harvesting power
constraint, by the technique of nested
optimization. Finally, we proposed an online algorithm based on a
closed-form off-line solution. It is shown by simulations that the
proposed online algorithm has a very close performance to the upper
bound achieved by the optimal off-line solution, and also outperforms other
heuristically designed online algorithms.

After submission of this manuscript, we become aware
of one interesting related work \cite{Devillers2012} that is worth
mentioning. In \cite{Devillers2012}, the throughput optimization
in a single-channel  energy
harvesting communication system with battery leakage is introduced.
The impact of battery leakage is very similar to that of the non-ideal
circuit power considered in this paper; as a result, the optimal
transmission policy developed in \cite{Devillers2012} is
similar to the one proposed in Section \ref{sec:offline} of this paper. One difference
is that the optimal off-line policy in \cite{Devillers2012}
is developed by decoupling the general multi-epoch problem into multiple
equivalent single-epoch subproblems, while the same optimal policy in
this paper is derived by decoupling the problem into only two subproblems (c.f. Appendix \ref{appendix:proof theorem 1}).
Compared with the solution in \cite{Devillers2012}, the off-line
policy in this paper reveals the optimal two-phase
structure in which EE and SE optimizations are unified, and thus motivates our online policies,
which are not given in \cite{Devillers2012}.

\appendices

\section{Proof of Lemma \ref{Lemma:1}} \label{appendix:proof Lemma 1}

Denote the length of the on-period ${\mathcal T}_i^{\rm{on}}$ as
$l_{i}^{\rm{on}}$ and that of the off-period ${\mathcal
T}_i^{\rm{off}}$ as $l_{i}^{\rm{off}}$, where $l_{i}^{\rm{on}} +
l_{i}^{\rm{off}} = L_i$. Without loss of generality, we only need to
consider the case with on-period ${\mathcal T}_i^{\rm{on}} = (t_{i-1},
t_{i-1}+l_{i}^{\rm{on}}]$ and off-period ${\mathcal T}_i^{\rm{off}} =
(t_{i-1}+l_{i}^{\rm{on}},t_{i}]$, since exchanging the power
allocation at two different time instants in an epoch does not
change the throughput and the energy constraint. Next, we
prove that the transmit power should be constant during the
on-period ${\mathcal T}_i^{\rm{on}}$ in an epoch by contradiction.

Suppose that the optimal allocated transmit power $\bar P(t)$, where
$\bar P(t) > 0, t \in (t_{i-1}, t_{i-1}+l_{i}^{\rm{on}}]$ is not constant.
Since $R(P(t))$ is a strictly concave function, based on Jensen's
inequality, we have
\begin{equation} \label{SinEpoch:6}
 R\left(\frac{\int_{t_{i-1}}^{t_{i-1}+l_i^{\rm{on}}} \bar P(t){\rm{d}}t}{l_i^{\rm{on}}}\right) > \int_{t_{i-1}}^{t_{i-1}+l_i^{\rm{on}}} \frac{ R( \bar P(t))}{l_i^{\rm{on}} }{\rm{d}}t,
\end{equation}
and then
\begin{align} \label{SinEpoch:7}
& \int_{t_{i-1}}^{t_{i-1}+l_i^{\rm{on}}}
R\left(\frac{\int_{t_{i-1}}^{t_{i-1}+l_i^{\rm{on}}} \bar
P(t){\rm{d}}t}{l_i^{\rm{on}}}\right){\rm{d}}t \nonumber\\= &{l_i^{\rm{on}} }
R\left(\frac{\int_{t_{i-1}}^{t_{i-1}+l_i^{\rm{on}}} \bar
P(t){\rm{d}}t}{l_i^{\rm{on}}}\right)  \nonumber\\ >&
\int_{t_{i-1}}^{t_{i-1}+l_i^{\rm{on}}} { R( \bar P(t))}{\rm{d}}t.
\end{align}
Thus, if we construct a new transmit power allocation $\hat P(t)$ as
$\hat P(t) = P_i = \frac{\int_{t_{i-1}}^{t_{i-1}+l_i^{\rm{on}}} \bar
P(t){\rm{d}}t}{l_i^{\rm{on}}}>0, t \in (t_{i-1},t_{i-1}+l_i^{\rm{on}}]$, we
can achieve a larger throughput than that achieved by $\bar P(t)$. Moreover, we
verify that $\hat P(t)$ consumes the same total energy as $\bar
P(t)$ in the $i$th epoch, i.e.,
\begin{align} \label{reformulation:10}
&\int_{t_{i-1}}^{t_{i-1}+l_i^{\rm{on}}} \left(\hat
P(t)+\alpha\right) {\rm{d}}t = l_i^{\rm{on}} (P_i+\alpha) \nonumber\\ =&
\int_{t_{i-1}}^{t_{i-1}+l_i^{\rm{on}}} \bar P(t){\rm{d}}t + l_i^{\rm{on}}
\alpha   = \int_{t_{i-1}}^{t_{i-1}+l_i^{\rm{on}}} \left(\bar
P(t)+\alpha\right){\rm{d}}t.
\end{align}
Therefore, based on   (\ref{SinEpoch:7}) and
(\ref{reformulation:10}), we conclude that $\bar P(t)$ cannot be
optimal and thus Lemma \ref{Lemma:1} is proved.

\section{Proof of Proposition \ref{Proposition:1}} \label{appendix:proof Proposition 1}

To solve (\ref{eq105oneepoch}), we note that the third
inequality constraint must be met with equality by the optimal
solution, since otherwise the throughput can be further improved by
increasing $P_1$. Thus, by substituting ${l_1^{\rm{on}}}=
\frac{E_0}{P_1+\alpha}$ into the objective function as well as the
constraint $l_1^{\rm on}\leq T$, the problem becomes equivalent to
finding
\begin{align}
\label{eq105oneepochequ} P_1^* &= \arg
\mathop { \max }\limits_{P_1>0, P_1\geq E_0/T-\alpha}
\frac{E_0}{P_1+\alpha} R(P_1) \nonumber\\&= \arg \mathop { \max }\limits_{P_1>0,
P_1 \geq E_0/T-\alpha} \frac{R(P_1)}{P_1+\alpha}.
\end{align}
Consider first the following problem with the relaxed power
constraint:
\begin{equation}
\begin{array}{l}\displaystyle
\mathop {\max }\limits_{P_1> 0} \frac{R(P_1)}{P_1+\alpha}.
\end{array}
\end{equation}
This problem has been studied in \cite{Miao}, where the globally
optimal solution is known as the EE-maximizing power allocation,
denoted by $P_{ee}$. It was also shown in \cite{Miao} that given
$\alpha>0$, $\frac{R(P_1)}{P_1+\alpha} $ is monotonically increasing
with $P_1$ if  $0\leq P_1 <P_{ee}$, and monotonically decreasing
with $P_1$ if $P_1>P_{ee}$. Thus, the solution of
(\ref{eq105oneepochequ}) is obtained as
 \begin{equation} \label{SinEpoch:8}
P_1^* = \max\left(P_{ee}, \frac{E_0}{T}-\alpha\right).
\end{equation}
Accordingly, the optimal on-period is given by
 \begin{equation} \label{SinEpoch:9}
\displaystyle l_1^{\rm{on}*} = \frac{E_0}{P_1^* + \alpha}.
\end{equation}
Proposition \ref{Proposition:1} is thus proved.

\section{Proof of Theorem \ref{theorem:1}} \label{appendix:proof theorem 1}

To prove Theorem \ref{theorem:1}, we construct the following two
sub-problems ${\mathbb{P}}_1$ and ${\mathbb{P}}_2$ for the power
allocation optimization in the first $i_{ee}$ epochs and the last
$N-i_{ee}$ epochs, respectively.
\begin{align}\label{equ:AppC:1}
{\mathbb{P}}_1: \mathop {\max }\limits_{\{P_i\},\{l_i^{{\rm{on}}}\}} ~ & \sum\limits_{i = 1}^{{i_{ee}}} {l_i^{{\rm{on}}}} R({P_i}), \nonumber \\
 {\rm{s}}{\rm{.t}}{\rm{. }} ~& 0 \le l_i^{{\rm{on}}} \le {L_i},i = 1, \ldots ,{i_{ee}}, \nonumber \\
 ~& \sum\limits_{j = 1}^i {({P_j} + \alpha )l_j^{{\rm{on}}}}  \le \sum\limits_{j = 0}^{i - 1} {{E_j}} ,i = 1, \ldots
 ,{i_{ee}}.
\end{align}
\begin{align}\label{equ:AppC:2}
{\mathbb{P}}_2:  \mathop {\max }\limits_{\{{P_i}\},\{l_i^{{\rm{on}}}\}} ~ & \sum\limits_{i = {i_{ee}} + 1}^N {l_i^{{\rm{on}}}} R({P_i}), \nonumber \\
 {\rm{s}}{\rm{.t}}{\rm{. }}~ & 0 \le l_i^{{\rm{on}}} \le {L_i},i = {i_{ee}} + 1, \ldots ,N \nonumber \\
~ & \sum\limits_{j = {i_{ee}} + 1}^i {({P_j} + \alpha
)l_j^{{\rm{on}}}}  \le \sum\limits_{j = {i_{ee}}}^{i - 1} {{E_j}} ,\nonumber \\&~~~~~~~~~~~~~~~~~~~~~~i
= {i_{ee}} + 1, \ldots ,N.
\end{align}
We will first prove that the solution given in Theorem
\ref{theorem:1} is optimal for both ${\mathbb{P}}_1$ and
${\mathbb{P}}_2$, and then prove that the optimal solutions for
${\mathbb{P}}_1$ and ${\mathbb{P}}_2$ are also optimal for
(\ref{eq4}).

First, we prove that the solution given in (\ref{optimal01}),
(\ref{optimal02}) and (\ref{optimal03}) is optimal for
${\mathbb{P}}_1$.

Consider the throughput maximization problem ${\mathbb{P}}_1$, with
the arrived energy $E_0, \ldots, E_{i_{ee}-1}$ at time $t_0, \ldots,
t_{i_{ee}-1}$ over the horizon $T_1={\sum \nolimits _{k=1}^{i_{ee}}
L_k}$. We construct an auxiliary throughput maximization problem
$\bar{\mathbb{P}}_1$ with the energy arrival ${\sum \nolimits
_{k=0}^{i_{ee}-1} E_k}, 0, \ldots, 0$ at time $t_0, \ldots,
t_{i_{ee}-1}$ over the same horizon $T_1$ as
follows.
\begin{align}\label{equ:AppC:3}
\bar{\mathbb{P}}_1: \mathop {\max }\limits_{\{{P_i}\},\{l_i^{{\rm{on}}}\}} ~ & \sum\limits_{i = 1}^{{i_{ee}}} {l_i^{{\rm{on}}}} R({P_i}), \nonumber \\
 {\rm{s}}{\rm{.t}}{\rm{. }} ~& 0 \le l_i^{{\rm{on}}} \le {L_i},i = 1, \ldots ,{i_{ee}}, \nonumber \\
 ~& \sum\limits_{j = 1}^{i_{ee}} {({P_j} + \alpha )l_j^{{\rm{on}}}}  \le \sum\limits_{j = 0}^{i_{ee} - 1} {{E_j}}.
\end{align}
It is clear that the optimal throughput of $\bar{\mathbb{P}}_1$ is
an upper bound on that of ${\mathbb{P}}_1$
since any feasible solution of ${\mathbb{P}}_1$
is also feasible for $\bar{\mathbb{P}}_1$. Note that there is no energy
arrived in $t_1, \ldots, t_{i_{ee}-1}$ for $\bar{\mathbb{P}}_1$, so
$\bar{\mathbb{P}}_1$ is indeed equivalent to
a throughput maximization problem for the single-epoch case
studied in Section III-B over a horizon $T_1$. It can be easily verified
based on (\ref{SinEpoch:12}) that $\displaystyle \frac{
\sum\nolimits_{j = 0}^{{i_{ee}} - 1} {{E_j}}}{T_1} - \alpha \le P_{ee}$;
thus, it follows after some simple manipulation that the optimal value
of $\bar{\mathbb{P}}_1$ is $\displaystyle \frac{
\sum\nolimits_{j = 0}^{{i_{ee}} - 1} {{E_j}}}{P_{ee} + \alpha} \cdot R(P_{ee})$
and is attained by $\bar P_1^* = \cdots =
\bar P_{i_{ee}}^*  = P_{ee}$ and $\displaystyle \sum_{j=1}^{i_{ee}}\bar l_j^{{\rm{on}}*} = \frac{
\sum\nolimits_{j = 0}^{{i_{ee}} - 1} {{E_j}}}{P_{ee} + \alpha}$. Meanwhile,
for ${\mathbb{P}}_1$ we can always construct a feasible solution based on (\ref{optimal01}), (\ref{optimal02}) and
(\ref{optimal03}) by setting
\begin{align}
P_j^*  & =P_{ee},\ \forall j=1,\ldots,i_{ee},\nonumber \\
l_k^{{{\rm{on}}}*} &= L_k,\ \forall k\neq i_{ee,j}, \forall j=1,\ldots,J,\nonumber \\
l_{i_{ee,j}}^{{\rm{on}}*} &= \frac{\sum_{k=i_{ee,j-1}}^{i_{ee,j}-1}E_k}{P_{ee}+\alpha}- \sum_{k=i_{ee,j-1}+1}^{i_{ee,j}}L_k,\  \forall j=1,\ldots,J.\label{equ:revision:1}
\end{align}
It can be verified from (\ref{SinEpoch:12})
that the solution satisfying (\ref{equ:revision:1}) is feasible
for ${\mathbb{P}}_1$, and attains an objective value of $\displaystyle \frac{
\sum\nolimits_{j = 0}^{{i_{ee}} - 1} {{E_j}}}{P_{ee} + \alpha} \cdot R(P_{ee})$, which is the same as the optimal value of $\bar{\mathbb{P}}_1$.
The gap between ${\mathbb{P}}_1$ and $\bar{\mathbb{P}}_1$ is thus zero,
and accordingly, the solution given in (\ref{optimal01}),
(\ref{optimal02}) and (\ref{optimal03}) is optimal for
${\mathbb{P}}_1$.

Second, we prove that the solution in (\ref{optimal104}),
(\ref{optimal05}) and (\ref{optimal06}) is optimal for Problem
${\mathbb P}_2$ over the last $N-i_{ee}$ epochs. First, we show
${l_i^{{\rm{on}}*}} = L_i, i = i_{ee}+1,\ldots, N$ by contradiction
as follows. Suppose that the optimal solution $\hat P(t)$ contains
an ``off'' period with $(\hat t^{\rm{off}}, \hat t^{\rm{off}}+\Delta
\hat t^{\rm{off}}) \subset (t_{i_{ee}},T]$, i.e., $\hat P(t) = 0, t
\in (\hat t^{\rm{off}}, \hat t^{\rm{off}}+\Delta \hat
t^{\rm{off}})$.

Due to the definition of  (\ref{SinEpoch:12}), it follows
immediately that
\begin{equation} \label{Proof101}
\begin{array}{l}\displaystyle \frac{\sum \nolimits _{k=i_{ee}}^{i-1} E_k}{\sum \nolimits _{k=i_{ee}+1}^{i}L_k} - \alpha > P_{ee}, \forall i > i_{ee}\end{array}
\end{equation}
Therefore, we can always find a time
duration with $\hat P(t) = \hat P^{\rm{on}} > P_{ee}, t \in (\hat
t^{\rm{on}}, \hat t^{\rm{on}}+\Delta \hat t^{\rm{on}}) \subset
(t_{i_{ee}},T]$, and construct a new policy $\displaystyle \bar
P(t)$ with $\displaystyle \bar P(t) = \bar P^{\rm{on}} = \frac{(\hat
P^{\rm{on}} + \alpha)\Delta \hat t^{\rm{on}}}{\Delta \hat
t^{\rm{on}} + \delta} - \alpha, t\in (\hat t^{\rm{on}}, \hat
t^{\rm{on}}+\Delta \hat t^{\rm{on}}) \cup (\hat t^{\rm{off}}, \hat
t^{\rm{off}}+\delta) $. Note that we have chosen $\delta$ to be
sufficiently small so that $(\hat t^{\rm{off}}, \hat
t^{\rm{off}}+\delta) \subseteq (\hat t^{\rm{off}}, \hat
t^{\rm{off}}+\Delta \hat t^{\rm{off}})$ and $\bar P^{\rm{on}} >
P_{ee}$. The energy consumed by $\bar P(t)$ during $(\hat
t^{\rm{on}}, \hat t^{\rm{on}}+\Delta \hat t^{\rm{on}}) \cup (\hat
t^{\rm{off}}, \hat t^{\rm{off}}+\delta)$ is $\hat E = (\bar
P^{\rm{on}} + \alpha)(\Delta \hat t^{\rm{on}}+\delta) = (\hat
P^{\rm{on}} + \alpha) \Delta \hat t^{\rm{on}}$, which is same as
the initial policy $\hat P(t)$. The throughput for the newly
constructed policy $\bar P(t)$ and initial policy $\hat P(t)$ during
$(\hat t^{\rm{on}}, \hat t^{\rm{on}}+\Delta \hat t^{\rm{on}}) \cup
(\hat t^{\rm{off}}, \hat t^{\rm{off}}+\delta) $ are
\[
\displaystyle B_1 = R(\bar P^{\rm{on}})(\Delta \hat t^{\rm{on}}+\delta) =
R(\bar P^{\rm{on}}) \frac{\hat E}{\bar P^{\rm{on}} + \alpha}
\]
and
\[
\displaystyle B_2 = R(\hat P^{\rm{on}})\Delta \hat
t^{\rm{on}} = R(\hat P^{\rm{on}}) \frac{\hat E}{\hat
P^{\rm{on}} + \alpha}
\]
respectively. Since $\hat P^{\rm{on}} > \bar P^{\rm{on}} > P_{ee}$,
and $ \frac{R(x)}{x + \alpha}$ is monotonically decreasing as a
function of $x$ when $x>P_{ee}$, we conclude that $B_1 > B_2$.
Therefore, the new policy achieves a higher throughput than the
initial policy.

Moreover, we need to check that the new policy also satisfies the
energy constraint as follows. If $\hat t^{\rm{off}} > \hat
t^{\rm{on}}+\Delta \hat t^{\rm{on}}$, or $(\hat t^{\rm{on}}, \hat
t^{\rm{on}}+\Delta \hat t^{\rm{on}}) $ and $ (\hat t^{\rm{off}},
\hat t^{\rm{off}}+\delta) $ are in the same epoch, it is evident
that the energy constraint is satisfied. If we cannot find an
interval $(\hat t^{\rm{off}}, \hat t^{\rm{off}}+\Delta \hat
t^{\rm{off}})$ latter than  or in the same epoch as $(\hat
t^{\rm{on}}, \hat t^{\rm{on}}+\Delta \hat t^{\rm{on}})$, i.e., all
the allocated power prior to $\hat t^{\rm{off}}$ is smaller than
$P_{ee}$, in this case based on (\ref{Proof101}), there must be some
energy left at time $\hat t^{\rm{off}}$. Since we selected $\delta$
to be sufficiently small, we can still guarantee that the energy
constraint is satisfied. Therefore, we prove that
${l_i^{{\rm{on}}*}} = L_i, i = i_{ee}+1,\ldots, N$.

After determining ${l_i^{{\rm{on}}*}} = L_i, i = i_{ee}+1,\ldots,
N$, we realize that Problem ${\mathbb P}_2$ for the last $N-i_{ee}$
epochs has the same structure as that studied in \cite{Yang}, and
thus \cite[Theorem 1]{Yang} is applicable here. After some change of
notation, we can prove that the solution in (\ref{optimal104}),
(\ref{optimal05}) and (\ref{optimal06}) is optimal for ${\mathbb
P}_2$ over the last $N-i_{ee}$ epochs.

Next, we show that the optimal solutions of ${\mathbb P}_1$ and
${\mathbb P}_2$ are optimal for (\ref{eq4}) to complete the
proof for Theorem \ref{theorem:1}. First, we can verify by following
the similar contradiction proof in the above that any energy
harvested during $[0,t_{i_{ee}})$ should be used up in the first
$i_{ee}$ epochs, i.e., $\sum\limits_{j = 1}^{{i_{ee}}} {({P_j^*} +
\alpha )l_j^{{\rm{on*}}}}  = \sum\limits_{j = 0}^{{i_{ee}} - 1}
{{E_j}}$. Therefore, the energy constraint for (\ref{eq4})
is equivalent to
\begin{equation} \label{Appendix2:01}
\begin{array}{l}
\sum \limits _{j=1}^{i}{(P_j + \alpha)l_j^{{\rm{on}}}} \le \sum \limits _{j=0}^{i-1} E_j, i=1,\ldots, i_{ee}\\
\sum \limits _{j=i_{ee}+1}^{i}{(P_j + \alpha)l_j^{{\rm{on}}}} \le
\sum \limits _{j=i_{ee}}^{i-1} E_j, i=i_{ee}+1,\ldots, N.
\end{array}
\end{equation}
Since the energy constraint is decoupled before and after $t_{i_{ee}}$,
solving (\ref{eq4}) is equivalent to optimizing ${P_i}$ and
$l_i^{{\rm{on}}}$ over $i = {1, \ldots ,i_{ee}}$ and $i ={i_{ee}} +
1, \ldots ,N$ separately. Thus, the optimal solutions for
${\mathbb{P}}_1$ and ${\mathbb{P}}_2$ are also optimal for
(\ref{eq4}). Theorem \ref{theorem:1} is thus proved.

\section{Proof of Proposition \ref{proposition:multichannel}} \label{appendix:proof propostion multichannel}

The first and third properties of $\bar R(P(t))$ can be directly
verified by the first and third properties of $R({\bf Q}(t))$,
respectively. Thus, to complete the proof of Proposition
\ref{proposition:multichannel}, we only need to show the second
property of $\bar R(P(t))$, i.e., it is a strictly concave function
of $P(t)$. Similar to \cite[Appendix B]{ZhangCR}, we show the proof
of this result as follows.

Since $\bar R(P(t))$ is obtained as the optimal value of
(\ref{Gene:3}), which is a convex optimization problem and satisfies
the Slater's condition \cite{Boydbook}. Thus, the duality gap for
this problem is zero. As a result, $\bar R(P(t))$ can be
equivalently obtained as the optimal value of the following min-max
optimization problem:
\begin{align} \label{equ:appendixB:1}
\bar R(P) &= \min\limits_{\mu \ge 0} \max\limits_{Q_k \geq 0}
R({\bf{Q}}) - \mu\left(\sum \limits_{k=1}^K Q_k - P\right)
\\ \label{equ:appendixB:2}
& = \min\limits_{\mu \ge 0}  R({\bf{Q}}^{(\mu)}) - \mu\sum
\limits_{k=1}^K Q_k^{(\mu)} + \mu P \\
\label{equ:appendixB:3} & = R({\bf{Q}}^{(\mu^{(P)})}) - \mu^{(P)}
\sum \limits_{k=1}^K Q_k^{(\mu^{(P)})} +\mu^{(P)} P
\end{align}
where we have removed $t$ for brevity, and in
(\ref{equ:appendixB:2}) ${\bf{Q}}^{(\mu)} =
[{Q}_1^{(\mu)},\ldots,{Q}_K^{(\mu)}]$ is the optimal solution for
the maximization problem with a given $\mu$, while in
(\ref{equ:appendixB:3}) $\mu^{(P)}$ is the optimal solution for the
minimization problem with a given $P$. Since $R({\bf{Q}}(t))$ is a
strictly joint concave function, the optimal solutions in the above
must be unique. Denote $\omega$ as any constant in $[0,1]$. Let
$\mu^{(P_1)}$, $\mu^{(P_2)}$ and $\mu^{(P_3)}$ be the optimal $\mu$
for $\bar R(P_1)$, $\bar R(P_2)$ and  $\bar R(P_3)$ with $P_3=\omega
P_1 + (1-\omega)P_2$, respectively. For $j=1,2$, we have
\begin{align} \label{equ:appendixB:4}
\bar R(P_j) &=  R({\bf{Q}}^{(\mu^{(P_j)})}) - \mu^{(P_j)} \sum
\limits_{k=1}^K Q_k^{(\mu^{(P_j)})} + \mu^{(P_j)} P_j \\
\label{equ:appendixB:5} & \le R({\bf{Q}}^{(\mu^{(P_3)})}) -
\mu^{(P_3)} \sum \limits_{k=1}^K Q_k^{(\mu^{(P_3)})} + \mu^{(P_3)}
P_j
\end{align}
where {\it strict inequality} holds for (\ref{equ:appendixB:5})
if $P_j \neq P_3$ since the optimal $\mu^{(P_j)}, j=1,2$, are unique.
Thus, we have
\begin{align} \label{equ:appendixB:6}
& \omega \bar R(P_1) + (1-\omega) \bar R(P_2)
\\ \le& R({\bf{Q}}^{(\mu^{(P_3)})}) - \mu^{(P_3)} \sum \limits_{k=1}^K Q_k^{(\mu^{(P_3)})} + \mu^{(P_3)} P_3\label{equ:appendixB:6-1}
\\ \label{equ:appendixB:7}
=& \bar R(P_3) \\ =& \bar R(\omega P_1 + (1-\omega)P_2),
\end{align}
where {\it strict inequality} holds for (\ref{equ:appendixB:6-1})
if $\omega\in(0,1)$. Therefore, $\bar R(P(t))$ is a strictly concave function over
$P(t)\geq 0$.  The proof of Proposition
\ref{proposition:multichannel} is thus completed.

\section{Proof of Proposition \ref{proposition:online}} \label{appendix:proof propostion online}

We prove this proposition by considering the following two cases.

First, consider the case when $i_{ee}$ exists, i.e.,
$i_{ee}\in\{1,\ldots,N\}$. In this case, setting $P_{ee}$ as the
transmit power at time 0 is optimal according to Table \ref{table1}.
Furthermore, we have $\frac{\sum \nolimits_{k=0}^{i_{ee,1}-1}E_k}{\sum
\nolimits_{k=1}^{i_{ee,1}}L_k} - \alpha < P_{ee}$ based on (\ref{SinEpoch:12}), and thus
\[
\displaystyle \min_{i=1,\ldots,N}\left(\frac{\sum
\nolimits_{k=0}^{i-1}E_k}{\sum \nolimits_{k=1}^{i}L_k} -
\alpha\right) \leq \frac{\sum \nolimits_{k=0}^{i_{ee,1}-1}E_k}{\sum
\nolimits_{k=1}^{i_{ee,1}}L_k} - \alpha < P_{ee}.
\]
Thus, (\ref{Online1}) is equivalent to $P^*(0) = P_{ee}$, which is
the optimal solution.

Next, consider the case when $i_{ee}$ does not exist, which implies
that $\frac{\sum \nolimits_{k=0}^{j-1}E_k}{\sum
\nolimits_{k=1}^{j}L_k} - \alpha > P_{ee}, \forall j=1,\ldots,N$.
Thus, we have
\[
\displaystyle \min_{i=1,\ldots,N}\left(\frac{\sum
\nolimits_{k=0}^{i-1}E_k}{\sum \nolimits_{k=1}^{i}L_k} -
\alpha\right) > P_{ee}
\]
and (\ref{Online1}) is equivalent to $P^*(0) = \min \limits
_{i=1,\ldots,N}(\frac{\sum \nolimits_{k=0}^{i-1}E_k}{\sum
\nolimits_{k=1}^{i}L_k} - \alpha)$, which is the optimal solution
according to Table \ref{table1}.

From the above two cases, (\ref{Online1}) coincides with the optimal
solution for $P^*(0)$. Thus, Proposition \ref{proposition:online} is
proved.

\end{document}